\documentclass[10pt,conference]{IEEEtran}
\usepackage{amsmath,amssymb,mathrsfs}
\usepackage{epsfig,epsf,subfigure,graphicx,graphics}
\usepackage{url}


\newtheorem{lemma}{Lemma}[section]
\newtheorem{theorem}[lemma]{Theorem}

\newtheorem{remark}[lemma]{Remark}
\newtheorem{claim}[lemma]{Claim}

\newcommand{\SNR}{\mathsf{SNR}}
\newcommand{\INR}{\mathsf{INR}}
\newcommand{\C}{\mathsf{C}^\mathsf{B}}
\newcommand{\E}{\mathrm{E}}
\newcommand{\Var}{\mathrm{Var}}

\newcommand{\etal}{{\it et al.}}

\newcommand{\aaaa}{\mathrm{(a)}}
\newcommand{\bbbb}{\mathrm{(b)}}
\newcommand{\cccc}{\mathrm{(c)}}

\newcommand{\lp}{\left(}
\newcommand{\rp}{\right)}
\newcommand{\lb}{\left[}
\newcommand{\rb}{\right]}
\newcommand{\lbp}{\left\{}
\newcommand{\rbp}{\right\}}

\newcommand{\mcal}{\mathcal}

\newcommand{\mfrak}{\mathfrak}
\newcommand{\what}{\widehat}

\newcommand{\mb}{\mathbf}
\newcommand{\mbb}{\mathbb}

\newcommand{\ra}{\rightarrow}

\graphicspath{{fig/}}

\title{Distributed Interference Cancellation in Multiple Access Channel with Transmitter Cooperation}
\author{
\authorblockN{I-Hsiang Wang}
\authorblockA{Wireless Foundations\\
University of California at Berkeley,\\
Berkeley, California 94720, USA\\
\textsf{ihsiang@eecs.berkeley.edu}}
}

\begin{document}
\maketitle

\begin{abstract}
We consider a two-user Gaussian multiple access channel with two independent additive white Gaussian interferences. Each interference is known to exactly one transmitter non-causally. Transmitters are allowed to cooperate through finite-capacity links. The capacity region is characterized to within $3$ and $1.5$ bits for the stronger user and the weaker user respectively, regardless of channel parameters. As a by-product, we characterize the capacity region of the case without cooperation to within $1$ and $0.5$ bits for the stronger user and the weaker user respectively. These results are based on a layered modulo-lattice transmission architecture which realizes distributed interference cancellation.
\end{abstract}

\section{Introduction}
In modern wireless communication systems, interference has become the major barrier for efficient utilization of available spectrum. 
In many scenarios, interferences are originated from sources close to transmitters and hence can be inferred by intelligent transmitters, while receivers cannot due to physical limitations. 
With the knowledge of interference as side information, transmitters are able to encode their information against interferences and mitigate them, even though the receiver cannot distinguish interferences from the desired signal. The simplest information theoretic model for studying such interference mitigation is the single-user point-to-point dirty-paper channel \cite{Costa_83}, which is a special case of state-dependent memoryless channels with the state known non-causaully to the transmitter \cite{GelfandPinsker_80}. It is shown that the effect of interference can be completely removed in the additive white Gaussian noise (AWGN) channel when the interference is also additive white Gaussian \cite{Costa_83}. 
As for multi-user scenarios, it has been found that when perfect state information is available non-causally at all transmitters, the capacity region of the AWGN multiple access channel (MAC) is not affected by the additive white Gaussian interference \cite{GelfandPinsker_84} \cite{KimSutivong_04}. When the sate information is known \emph{partially} to different transmitters in the MAC, however, the capacity loss caused by the interference is unbounded as the signal-to-noise ratios increase \cite{PhilosofZamir_07} \cite{Somekh-BaruchShamai_08}. Since each transmitter only has partial knowledge about the interference, interference cancellation has to be realized in a \emph{distributed} manner. 

In this paper, we consider an $K$-user Gaussian MAC with $K$ independent additive white Gaussian interferences. Each interference is known to exactly one transmitter non-causally. Furthermore, we allow transmitters to cooperate through finite-capacity links, so that transmitters can cooperatively transmit their messages and/or mitigate the known interferences. This is exactly the same model studied in \cite{PhilosofZamir_07} except for the transmitter cooperation. For simplicity, we mainly focus on the two-user case, termed as \emph{doubly-dirty MAC} in \cite{PhilosofZamir_07}. The model is depicted in Fig.~\ref{fig_Model}, where
\begin{align*}
y = x_1+x_2+s_1+s_2+z,
\end{align*}
and $z\sim\mcal{N}\lp 0, N_o\rp$ is the AWGN noise. Interference $s_i\sim\mcal{N}\lp 0,Q_i\rp$, $i=1,2$, independent of everything else, is known non-causally to transmitter $i$ \emph{only}. Power constraint at transmitter $i$ is $P_i$, $i=1,2$. Define channel parameters $\SNR_i := P_i / N_o$, $\INR_i := Q_i / N_o$, for $i=1,2$. Transmitter cooperation is induced by two orthogonal noise-free links with capacity $\C_{12}$ and $\C_{21}$, which carry signals $t_{12}$ and $t_{21}$ respectively. User $i$'s rate is denoted by $R_i$, $i=1,2$. Throughout this paper, without loss of generality we assume that user 1 has a stronger transmission power, that is, $P_1 \ge P_2$. 

\begin{figure}[htbp]
{\center
\includegraphics[width=3in]{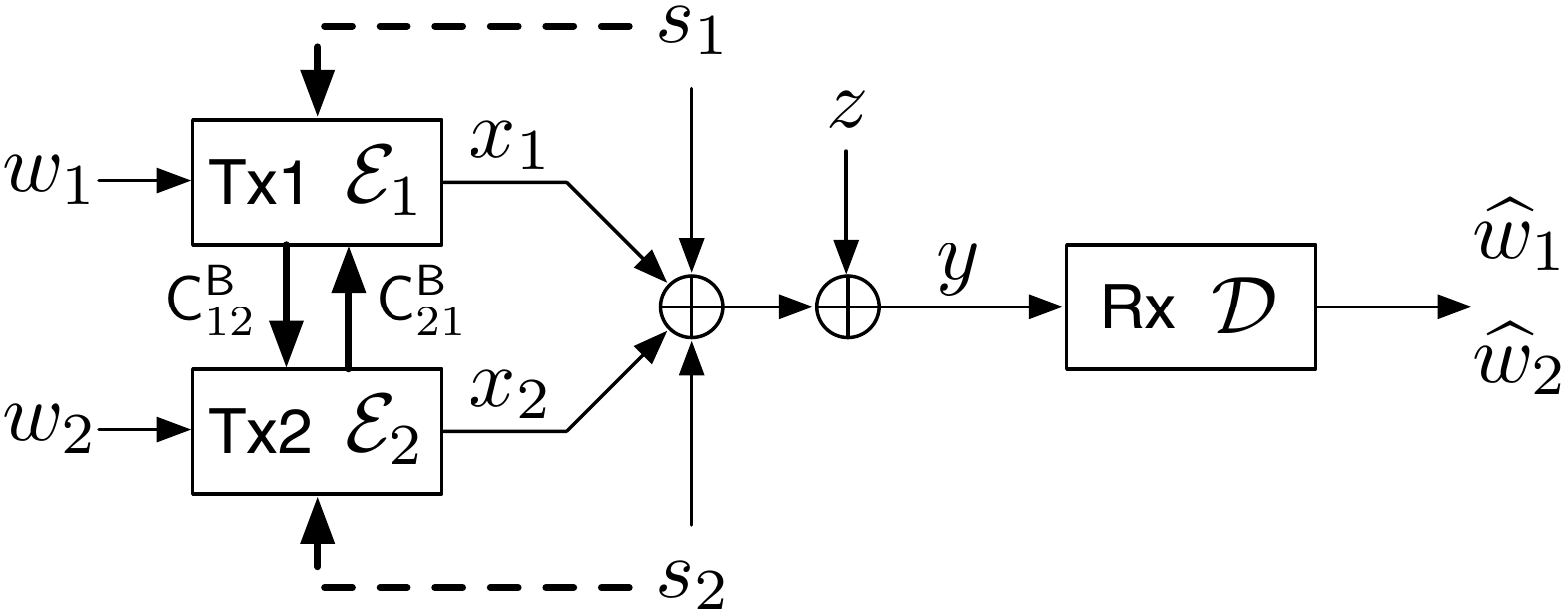}
\caption{Channel Model}
\label{fig_Model}
}
\end{figure}

State-dependent networks with partial state knowledge available at different nodes have been studied in various scenarios. Kotagiri \etal \cite{KotagiriLaneman_08} study the state-dependent MAC with state non-causally known to only one transmitter, and for the Gaussian case they characterize the capacity asymptotically at infinite interference ($Q_1=\infty, Q_2=0, \C_{12}=\C_{21}=0$) as the informed transmitter's power grows to infinity. Somekh-Baruch \etal \cite{Somekh-BaruchShamai_08} study the problem with the same set-up as \cite{KotagiriLaneman_08} while the informed transmitter knows the other's message ($Q_2=0, \C_{12}=0, \C_{21}=R_2$), and they characterize the capacity region completely. Zaidi \etal \cite{ZaidiKotagiri_09} study another case of degraded message set ($Q_2=0, \C_{21}=0, \C_{12}=R_1$). The achievability part of \cite{KotagiriLaneman_08}, \cite{Somekh-BaruchShamai_08}, and \cite{ZaidiKotagiri_09} are based on random binning. Philosof \etal \cite{PhilosofZamir_07}, on the other hand, characterize the capacity region of the doubly-dirty MAC to within a constant gap at infinite interferences (i.e., $Q_1=Q_2=\infty, \C_{12}=\C_{21}=0$), by lattice strategies. They also show that strategies based on Gaussian random binning is unboundedly worse than lattice-based strategies.
On the other hand, transmitter cooperation has also been widely investigated in various scenarios, and a non-exhaustive list includes MAC \cite{Willems_83} \cite{BrossLapidoth_08}, interference channels \cite{PrabhakaranViswanathSRC_09} \cite{WangTse_10}, MAC with state known to all transmitters \cite{BrossLapidoth_08}, and MAC with partial state known to transmitters and full state knowledge at the receiver \cite{PermuterShamai_10}.

\subsection*{Main Contribution}\label{subsec_Main}
We characterize the capacity region of the doubly-dirty MAC with transmitter cooperation to within $3$ and $1.5$ bits for $R_1$ and $R_2$ respectively. As a by-product, we characterize the capacity region of the doubly-dirty MAC without cooperation to within $1$ and $0.5$ bits for $R_1$ and $R_2$ respectively, and hence extend the constant-gap-to-optimality result in \cite{PhilosofZamir_07} to \emph{arbitrary} interference powers. The constant gap is independent of channel parameters and provides a strong guarantee on the performance. It turns out that the approximate capacity region does not depend on $\C_{12}$. The cooperation link from the stronger transmitter (Tx1) to the weaker one (Tx2) is not required to achieve the constant-gap-to-optimality performance, and it only yields a power gain which is upper bounded by a constant that does not increase with $\C_{12}$. In particular, consider the system with $\C_{21}=0$ and $\C_{12}=\infty$. As a consequence of the main results, the capacity region of this system is at most $2$ bits per user larger than the capacity region of the system without cooperation.

For the achievability part, we propose a layered superposition strategy consisting of three layers - (1) the \emph{lattice strategy} layer $\mfrak{L}$, (2) the \emph{cooperation} layer $\mfrak{C}$, and (3) the \emph{Gaussian random binning}/relaying layer $\mfrak{R}$. The hierarchy of the layers is $\mfrak{R}\ra \mfrak{C}\ra \mfrak{L}$, from the top to the bottom. Each layer treats the signals sent at higher layers as \emph{interference}, each of which is known non-causally to exactly one transmitter but not both.
In layer $\mfrak{L}$, we use a similar lattice strategy as \cite{PhilosofZamir_07} to realize distributed interference cancellation. Tx2 uses up its own power in this layer. In layer $\mfrak{C}$, the weaker transmitter Tx2 compresses the precoded information (precoded against intereference $s_2$) at a proper distortion, and uses part of the cooperation capacity to send the compression index to Tx1. Then Tx1 precodes it along with user 1's information against the aggregate interference at this layer. In layer $\mfrak{R}$, Tx2 uses the rest of the cooperation capacity to send additional data to Tx1. Tx1 uses the rest of its power to further transmit its own information or relay user 2's information, precoded against $s_1$ using either Gaussian random binning \cite{Costa_83} or lattice strategies \cite{ErezShamai_05}\footnote{The name ``Gaussian random binning layer $\mfrak{R}$" is to stress that Gaussian random binning and lattice strategies are equally good.}. For the outer bound, we use a similar argument as \cite{PhilosofZamir_09}.


\subsection*{Notations}
Throughout the paper, the block coding length is denoted by $N$, a sequence of random variables $x[1],\ldots, x[N]$ is denoted by $x^N$ and boldface $\mb{x}$ interchangeably. Logarithms are of base $2$ if not specified. We use a short-hand notation $\mcal{C}\lp \cdot \rp$ to denote $\frac{1}{2}\log\lp 1+ \cdot \rp$, $\lp\cdot\rp^+$ to denote $\max\lbp0, \cdot\rbp$, and $\log^+\lp\cdot\rp$ to denote $\lp\log\lp\cdot\rp\rp^+$. $\mbb{I}\lbp A\rbp$ denotes the indicator function, which is evaluated to $1$ if event $A$ is true and $0$ otherwise.

\section{Without Cooperation}
To better convey the idea of achievability and outer bounds, we first deal with the case without cooperation. Outer bounds are derived first. Then we describe the scheme that achieves the capacity to within a constant gap for arbitrary interference variances and transmit powers. 

\subsection{Main Result}
The main result in this section is summarized in the following lemmas and theorem.

\begin{lemma}[Outer Bounds Without Cooperation]\label{lem_OutBdwoCoop}
If nonnegative $(R_1,R_2)$ is achievable, it satisfies the following:
\begin{align}
R_1+R_2 &\le \mcal{C}\lp \SNR_1+\SNR_2\rp \label{eq_OuterSum1}\\
R_1+R_2 &\le \mcal{C}\lp \frac{1+\SNR_1+\SNR_2}{\INR_2}\rp + \mcal{C}\lp\SNR_2\rp \label{eq_OuterSum2}\\
R_2 &\le \mcal{C}\lp \SNR_2\rp \label{eq_OuterR2}.
\end{align}
\end{lemma}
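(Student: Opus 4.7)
The plan is to apply Fano's inequality together with a separate genie-aided argument for each bound, where revealing additional side information to the receiver can only enlarge the capacity region and hence preserves the validity of any upper bound. I would tackle the three inequalities in order of increasing difficulty.

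For~\eqref{eq_OuterSum1}, I would reveal both interferences $S_1^N$ and $S_2^N$ to the receiver. Subtracting these from $Y^N$ recovers the standard two-user Gaussian MAC $X_1^N+X_2^N+Z^N$. Since there is no cooperation, $X_1^N$ and $X_2^N$ are independent (they are functions of the disjoint pairs $\lp W_1,S_1^N\rp$ and $\lp W_2,S_2^N\rp$), and the usual Gaussian MAC converse yields $R_1+R_2 \le \mcal{C}\lp\SNR_1+\SNR_2\rp$.

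For~\eqref{eq_OuterR2}, I would reveal $\lp W_1, S_1^N, S_2^N\rp$ to the receiver. Since $X_1^N$ is a deterministic function of $\lp W_1, S_1^N\rp$, the receiver can form $Y^N - X_1^N - S_1^N - S_2^N = X_2^N + Z^N$, a point-to-point AWGN channel for user~$2$. Fano's inequality, combined with the independence of $\lp W_1, S_1^N\rp$ from $\lp W_2, S_2^N, Z^N\rp$ and Gaussian entropy maximization, then gives $R_2 \le \mcal{C}\lp\SNR_2\rp$.

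The nontrivial bound is~\eqref{eq_OuterSum2}, and it is where the main obstacle lies. My plan is to split $R_1+R_2$ and control the two parts with different genies: $R_2 \le \mcal{C}\lp\SNR_2\rp$ is handled exactly as above, while for $R_1$ I would reveal $\lp S_1^N, W_2\rp$ to the receiver and aim at $R_1 \le \mcal{C}\lp(1+\SNR_1+\SNR_2)/\INR_2\rp$. The physical idea is that, since Tx1 cannot access $S_2$, the state $S_2$ must remain as irreducible noise of variance $Q_2$ in user~$1$'s effective channel, while the receiver's observation carries at most signal power $P_1+P_2+N_0$ per symbol. The hard step is that Tx2 can correlate $X_2^N$ with $S_2^N$, so a naive Gaussian upper bound on the conditional entropy of $Y^N$ leaves a slack tied to the cross-correlation $2\sqrt{P_2 Q_2}$ that is too loose to produce the claimed $\INR_2$ in the denominator. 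I expect to need an entropy-power-type argument---exploiting the independence of $S_2^N$ from $\lp W_1, S_1^N, X_1^N, W_2\rp$---to lower-bound the differential entropy of the effective $S_2$-induced noise by $\frac{N}{2}\log\lp 2\pi e Q_2\rp$ uniformly in Tx2's encoder; that is the step I expect to require the most care.
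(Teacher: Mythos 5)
Your arguments for \eqref{eq_OuterSum1} and \eqref{eq_OuterR2} are fine and match the paper's genie-aided derivations: revealing $\lp s_1^N,s_2^N\rp$ reduces both bounds to the clean Gaussian MAC, and independence of $x_1^N,x_2^N$ follows from the disjoint message/state pairs.

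The approach for \eqref{eq_OuterSum2}, however, has a genuine gap. You propose to prove the two individual bounds $R_1 \le \mcal{C}\lp\lp 1+\SNR_1+\SNR_2\rp/\INR_2\rp$ and $R_2 \le \mcal{C}\lp\SNR_2\rp$ and add them, but the first bound is \emph{false}. As $\INR_2 \to\infty$ it forces $R_1\to 0$, yet the achievable region (Lemma \ref{lem_InnBdwoCoop}, and already Philosof \etal\ at infinite interference) permits $R_1$ close to $\tfrac12\log^+\lp\tfrac12+\SNR_2\rp > 0$, so the claimed per-rate bound contradicts achievability. The underlying reason is exactly at the step you flagged: the entropy of the residual $x_2^N+s_2^N+z^N$ given $w_2$ \emph{cannot} be lower-bounded by $\tfrac{N}{2}\log\lp 2\pi e\, Q_2\rp$, because conditioned on $w_2$, $x_2^N$ is a deterministic function of $s_2^N$ (Tx2 knows $s_2^N$) and the scheme may pre-cancel $s_2$; e.g.\ $x_2 = -\sqrt{P_2/Q_2}\, s_2$ already shrinks the residual variance to $\lp\sqrt{Q_2}-\sqrt{P_2}\rp^2 + N_o$, and modulo-lattice precoding can do even better. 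So no entropy-power-type argument can rescue this decomposition.

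The paper avoids this entirely: it bounds $R_1+R_2$ \emph{jointly}. Starting from $N\lp R_1+R_2-\epsilon_N\rp \le I\lp w_1,w_2;y^N\rp = I\lp w_1,w_2,s_1^N,s_2^N;y^N\rp - I\lp s_1^N,s_2^N;y^N\mid w_1,w_2\rp$, it expands to $h\lp y^N\rp - h\lp z^N\rp - h\lp s_1^N,s_2^N\rp + h\lp s_1^N,s_2^N\mid w_1,w_2,y^N\rp$. The $1/\INR_2$ factor comes from the \emph{negative} prior-entropy term $-h\lp s_2^N\rp = -\tfrac N2\log\lp 2\pi e\, Q_2\rp$, while $h\lp s_2^N\mid w_1,w_2,y^N,s_1^N\rp$ is rewritten as $h\lp x_2^N+z^N\mid w_1,w_2,y^N,s_1^N\rp$ (using that $x_1^N$ is a function of $\lp w_1,s_1^N\rp$) and upper-bounded by $\tfrac N2\log\lp 2\pi e\lp N_o+P_2\rp\rp$. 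This yields $\mcal{C}\lp\SNR_2\rp$, and the remaining $h\lp y^N\mid s_1^N\rp + h\lp s_1^N\rp$ terms combine with $-h\lp s_2^N\rp$ to produce $\mcal{C}\lp\lp 1+\SNR_1+\SNR_2\rp/\INR_2\rp$. You would need to switch to this joint sum-rate argument (or an equivalent one) rather than splitting into per-user bounds.
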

\begin{proof}
See Appendix \ref{app_PfOutBdwoCoop}.
\end{proof}

\begin{lemma}[Achievable Rate Without Cooperation]\label{lem_InnBdwoCoop}
If nonnegative $\lp R_1,R_2\rp$ satisfies the following, it is achievable:
\begin{align}
R_1+R_2 &\le \frac{1}{2}\log^+\lp \frac{1}{2} + \SNR_2\rp + \mcal{C}\lp \frac{\SNR_1 - \SNR_2}{1+2\SNR_2+\INR_2}\rp \label{eq_InnerSum}\\
R_2 &\le \frac{1}{2}\log^+\lp \frac{1}{2} + \SNR_2\rp \label{eq_InnerR2}.
\end{align}
\end{lemma}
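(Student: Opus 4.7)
The plan is to prove achievability by constructing a two-layer superposition code: an inner Philosof-Zamir lattice layer $\mfrak{L}$ that absorbs both interferences $s_1,s_2$, plus an outer Costa dirty-paper layer $\mfrak{R}$ that exploits Tx1's surplus power $P_1-P_2$. Concretely, I will set $x_1 = x_{1,\mfrak{L}} + x_{1,\mfrak{R}}$ with powers $P_2$ and $P_1-P_2 \ge 0$, and $x_2 = x_{2,\mfrak{L}}$ with power $P_2$. User 1's message splits as $(m_{1,\mfrak{L}}, m_{1,\mfrak{R}})$, while user 2's message $m_2$ rides only on the lattice layer; both transmitters send a lattice codeword, and the lattice-layer data is freely allocable between the two users subject to a sum-rate constraint.

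For the inner layer I use a nested lattice pair $(\Lambda_c,\Lambda_f)$ with $\Lambda_c$ Rogers-good of second moment $P_2$, together with independent dithers $D_1,D_2$, and form $x_{i,\mfrak{L}} = [u_i - \alpha s_i - D_i]\bmod\Lambda_c$ with the MMSE coefficient $\alpha$. The Philosof-Zamir argument for the doubly-dirty MAC then gives that, once the outer layer has been stripped off, the receiver recovers $u_1+u_2\bmod\Lambda_c$ whenever $R_{1,\mfrak{L}} + R_{2,\mfrak{L}} \le \tfrac{1}{2}\log^+\lp\tfrac{1}{2}+\SNR_2\rp$; the ``$+\tfrac{1}{2}$'' inside the log is the familiar residue of the MMSE-scaled lattice self-noise, and the $(\cdot)^+$ simply encodes the option of disabling this layer when $\SNR_2 < 1/2$. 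For the outer layer, $x_{1,\mfrak{R}}$ is a Costa code of power $P_1-P_2$ carrying $m_{1,\mfrak{R}}$ and precoded by Tx1 against $s_1$ only. Decoding proceeds in the order $\mfrak{R}\to\mfrak{L}$: the receiver first decodes $x_{1,\mfrak{R}}$ from $y$ while treating $x_{1,\mfrak{L}} + x_{2,\mfrak{L}} + s_2 + z$ as effective white Gaussian noise of variance $2P_2+Q_2+N_o$, so that Costa's theorem yields $R_{1,\mfrak{R}} \le \mcal{C}\lp\frac{\SNR_1-\SNR_2}{1+2\SNR_2+\INR_2}\rp$; it then subtracts the decoded $x_{1,\mfrak{R}}$ and runs the Philosof-Zamir decoder on $y - x_{1,\mfrak{R}}$.

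Setting $R_1 = R_{1,\mfrak{L}}+R_{1,\mfrak{R}}$ and $R_2 = R_{2,\mfrak{L}}$, the lattice sum-rate bound together with the Costa bound directly yields \eqref{eq_InnerSum}, while $R_2 = R_{2,\mfrak{L}} \le R_{1,\mfrak{L}}+R_{2,\mfrak{L}}$ yields \eqref{eq_InnerR2}; interior rate pairs, including the corner where user 1 takes over the entire lattice-layer sum rate and $R_2=0$, are obtained by sliding $R_{1,\mfrak{L}}$ between $0$ and $\tfrac{1}{2}\log^+(\tfrac{1}{2}+\SNR_2)$. The main technical obstacle I anticipate is justifying the first decoding step — namely, that the dithered lattice signals $x_{1,\mfrak{L}}+x_{2,\mfrak{L}}$ really act as independent white Gaussian noise of variance $2P_2$ so that Costa's coding theorem is applicable as stated. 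This is the standard consequence of nested-lattice dithering together with Rogers-good covering, but it requires care because $x_{1,\mfrak{L}}$ is a function of both $s_1$ and Tx1's own data; a secondary, easier item is absorbing the event of an outer-layer decoding error into the overall block error probability via a union bound.
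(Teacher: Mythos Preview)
Your two-layer architecture, power split, and rate accounting match the paper's scheme, but there is a genuine gap in the decoding step ``subtracts the decoded $x_{1,\mfrak{R}}$ and runs the Philosof--Zamir decoder on $y - x_{1,\mfrak{R}}$.'' With Costa's random-binning code the receiver recovers the auxiliary $U = x_{1,\mfrak{R}} + \alpha_{\mfrak{R}} s_1$, not $x_{1,\mfrak{R}}$ itself; since the receiver does not know $s_1$, it cannot form $y - x_{1,\mfrak{R}}$. Subtracting $U$ instead leaves a residual $(1-\alpha_{\mfrak{R}})s_1$ whose scaling is set by the outer layer's MMSE coefficient, which does not match what your inner lattice precoder $[u_1 - \alpha_{\mfrak{L}} s_1 - D_1]\bmod\Lambda_c$ was designed to cancel. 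So the successive-cancellation order $\mfrak{R}\to\mfrak{L}$ does not go through as written.

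The paper's fix is simple but essential: since Tx1 generates $x_{1,\mfrak{R}}$ before $x_{1,\mfrak{L}}$, it can fold $x_{1,\mfrak{R}}$ into the \emph{known interference} for the lattice layer, i.e.\ set $\mb{s}_{1\mfrak{L}} = \mb{s}_1 + \mb{x}_{1\mfrak{R}}$ and form $x_{1,\mfrak{L}} = [v_{1\mfrak{L}} - \alpha_{\mfrak{L}}(\mb{s}_1 + \mb{x}_{1\mfrak{R}}) - \mb{d}_{1\mfrak{L}}]\bmod\Lambda_{\mfrak{L}}$. The modulo-$\Lambda_{\mfrak{L}}$ operation at the receiver then annihilates $x_{1,\mfrak{R}}$ automatically, and the two decoders $\mcal{D}_{\mfrak{R}}$ and $\mcal{D}_{\mfrak{L}}$ can run \emph{in parallel} on the raw $y$ --- no subtraction is ever needed. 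This is exactly the point the paper flags as its novelty over Philosof--Zamir. On your secondary concern: independence of $x_{1,\mfrak{L}},x_{2,\mfrak{L}}$ from $(s_1,s_2)$ and hence from the Costa auxiliary follows immediately from dithering (the Crypto Lemma), not from Rogers-goodness; what remains is that the effective noise $x_{1,\mfrak{L}}+x_{2,\mfrak{L}}+s_2+z$ in layer $\mfrak{R}$ is non-Gaussian, and the paper handles this not via covering-goodness but via a worst-case-noise argument showing Costa's rate $\mcal{C}(P/N_Z)$ is still achievable when the additive noise has variance $N_Z$ but arbitrary distribution.
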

\begin{proof}
Achievability will be detailed in this section.
\end{proof}

\begin{theorem}[Constant Gap to Optimality]\label{thm_ConstGapwoCoop}
The above inner and outer bounds are within $\lp1,0.5\rp$ bits for $\lp R_1,R_2\rp$.
\end{theorem}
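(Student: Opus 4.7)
The plan is to reduce Theorem~\ref{thm_ConstGapwoCoop} to two scalar inequalities comparing the right-hand sides of Lemmas~\ref{lem_OutBdwoCoop} and~\ref{lem_InnBdwoCoop}. Given $(R_1,R_2)$ in the outer region, I want to show that the shifted pair $\bigl((R_1-1)^+,(R_2-\tfrac12)^+\bigr)$ lies in the inner region. Since the inner region is characterized by only a sum-rate inequality and an $R_2$-only inequality, this reduces to verifying
\begin{align*}
\mathcal{C}(\SNR_2) &\le \tfrac12\log^+\!\bigl(\tfrac12+\SNR_2\bigr) + \tfrac12,\\
\min\bigl\{\text{outer sums}\bigr\} &\le \text{inner sum} + \tfrac32,
\end{align*}
and then handling by routine casework the rate pairs for which one of the $(\cdot)^+$ truncations is active.

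For the $R_2$ inequality I split at $\SNR_2=1/2$. When $\SNR_2<1/2$ the $\log^+$ vanishes and $\mathcal{C}(\SNR_2)<\tfrac12\log\tfrac32<\tfrac12$; when $\SNR_2\ge 1/2$ the gap equals $\tfrac12\log\tfrac{2+2\SNR_2}{1+2\SNR_2}\le\tfrac12$.

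For the sum-rate inequality I pick the tighter of the two outer bounds based on whether $\INR_2\lessgtr 1+2\SNR_2$. Writing $a=\SNR_1$, $b=\SNR_2$, $c=\INR_2$ and assuming $b\ge 1/2$ (the $b<1/2$ case is easier since the $\log^+$ contribution on both sides vanishes): in the regime $c\le 1+2b$ I compare against $\mathcal{C}(a+b)$, so that twice the gap equals the logarithm of $\frac{(1+a+b)(1+2b+c)}{(1/2+b)(1+a+b+c)}$; using $(1+a+b)/(1+a+b+c)\le 1$ together with $(1+2b+c)/(1/2+b)=2(1+2b+c)/(1+2b)\le 4$ yields a gap of at most $1$ bit. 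In the regime $c>1+2b$ I compare against $\mathcal{C}((1+a+b)/c)+\mathcal{C}(b)$; now the argument simplifies to $\frac{2(1+b)(1+2b+c)}{c(1+2b)}$, and here $2(1+b)/(1+2b)\le 2$ and $(1+2b+c)/c\le 2$, again bounding the gap by $1$ bit.

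Combining these estimates closes the theorem: the $R_2$ shift of $1/2$ absorbs the $R_2$-only gap, and the joint shift of $(1,1/2)$ absorbs a sum-rate gap of at most $1<3/2$. The only real bookkeeping is the two-way case split on $\INR_2$ together with the $\log^+$ boundary at $\SNR_2=1/2$; no conceptual obstacle arises beyond verifying scalar inequalities.
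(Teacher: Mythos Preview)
Your proposal is correct and follows essentially the same route as the paper: bound the $R_2$ gap by $\tfrac12$ directly, then split the sum-rate comparison at $\INR_2=1+2\SNR_2$, using \eqref{eq_OuterSum1} on one side and \eqref{eq_OuterSum2} on the other, to obtain a sum-rate gap of at most $1$ bit. The only imprecision is the phrase ``the $\log^+$ contribution on both sides vanishes'' for $\SNR_2<\tfrac12$ --- there is no $\log^+$ on the outer side; what actually happens is that the truncation on the inner side only increases the inner bound, so the gap computed with the untruncated formula (which you already showed is $\le 1$) still dominates.
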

\begin{proof}
We combine Lemma \ref{lem_OutBdwoCoop} and \ref{lem_InnBdwoCoop} to obtain the theorem. See Appendix \ref{app_PfConstGapwoCoop} for detailed gap analysis.
\end{proof}

\subsection{Achievability}
We use the rest of this section to establish the achievability result in Lemma \ref{lem_InnBdwoCoop}. We refer to \cite{PhilosofZamir_07} and the references therein for preliminary on lattices.

The scheme consists of two layers: layer $\mfrak{R}$ and layer $\mfrak{L}$. As described in Section \ref{subsec_Main}, $\mfrak{R}$ stands for Gaussian random binning and $\mfrak{L}$ stands for lattice-based strategy. We decompose message $w_1$ into $(w_{1\mfrak{R}}, w_{1\mfrak{L}})$ and rename $w_2$ as $w_{2\mfrak{L}}$. We split the encoder at Tx1, $\mcal{E}_1$, into $\lp \mcal{E}_{1\mfrak{R}}, \mcal{E}_{1\mfrak{L}}\rp$, split the decoder at Rx, $\mcal{D}$, into $\lp \mcal{D}_{\mfrak{R}}, \mcal{D}_{\mfrak{L}}\rp$, and rename the encoder at Tx2, $\mcal{E}_2$, as $\mcal{E}_{2\mfrak{L}}$. Encoders $\mcal{E}_{1\mfrak{R}}$, $\mcal{E}_{1\mfrak{L}}$, and $\mcal{E}_{2\mfrak{L}}$ output signals $\mb{x}_{1\mfrak{R}}$, $\mb{x}_{1\mfrak{L}}$, and $\mb{x}_{2\mfrak{L}}$ respectively. Tx1 sends out the superposition of $\mb{x}_{1\mfrak{R}}$ and $\mb{x}_{1\mfrak{L}}$. Hence the receive signal can be written as
\begin{align*}
\mb{y} = \mb{x}_{1\mfrak{R}} + \mb{x}_{1\mfrak{L}} + \mb{x}_{2\mfrak{L}} + \mb{s}_1 + \mb{s}_2 + \mb{z}.
\end{align*}

{\flushleft \it Encoding}\par
{\flushleft 1) Layer $\mfrak{L}$}:
Encoders $\mcal{E}_{1\mfrak{L}}$ and $\mcal{E}_{2\mfrak{L}}$ use a lattice $\Lambda_{\mfrak{L}}$ with second moment $\Theta_{\mfrak{L}} = P_2$ and basic Voronoi region $\mcal{V}_{\mfrak{L}}$ to modulate $w_{1\mfrak{L}}$ and $w_{2\mfrak{L}}$. Generate random independent codebooks of sizes $2^{NR_{1\mfrak{L}}}$ and $2^{NR_{2\mfrak{L}}}$ according to $\mathrm{Unif}\lp\mcal{V}_{\mfrak{L}}\rp$ for $w_{1\mfrak{L}}$ and $w_{2\mfrak{L}}$ respectively. Let the codewords be $\mb{v}_{1\mfrak{L}}$ and $\mb{v}_{2\mfrak{L}}$ respectively.

Signals $\mb{x}_{1\mfrak{L}}$ and $\mb{x}_{2\mfrak{L}}$ are generated using the following modulo-lattice operation:
\begin{align}
\mb{x}_{i\mfrak{L}} = \lb \mb{v}_{i\mfrak{L}} - \alpha_{\mfrak{L}}\mb{s}_{i\mfrak{L}} - \mb{d}_{i\mfrak{L}}\rb \bmod\Lambda_{\mfrak{L}},\ i=1,2,
\end{align}
where $\mb{d}_{1\mfrak{L}}$ and $\mb{d}_{2\mfrak{L}}$, randomly and independently generated according to $\mathrm{Unif}\lp\mcal{V}_{\mfrak{L}}\rp$, are dithers known to the receiver \cite{ErezShamai_05}. $\alpha_{\mfrak{L}}$ is the MMSE coefficient $\frac{2\Theta_{\mfrak{L}}}{2\Theta_{\mfrak{L}}+N_o} = \frac{2\SNR_2}{1+2\SNR_2}$. 

$\mb{s}_{1\mfrak{L}}$ and $\mb{s}_{2\mfrak{L}}$ denote the effective \emph{interferences} known to Tx1 and Tx2 respectively in this layer: $\mb{s}_{1\mfrak{L}} = \mb{s}_1 + \mb{x}_{1\mfrak{R}}$ and $\mb{s}_{2\mfrak{L}} = \mb{s}_2$. Note that $\mb{s}_{1\mfrak{L}}$ can be produced by the higher layer encoder $\mcal{E}_{1\mfrak{R}}$.

{\flushleft 2) Layer $\mfrak{R}$}:
Layer $\mfrak{R}$ is only used at Tx1 for user 1. Encoder $\mcal{E}_{1\mfrak{R}}$ uses power $\Theta_{\mfrak{R}} = P_1-\Theta_{\mfrak{L}}$ to encode message $w_{1\mfrak{R}}$, using dirty-paper coding against interference $\mb{s}_1$. 

The encoder architecture at Tx1 is depicted in Fig.~\ref{fig_ENCwoCoop}.
\begin{figure}[htbp]
{\center
\includegraphics[width=2in]{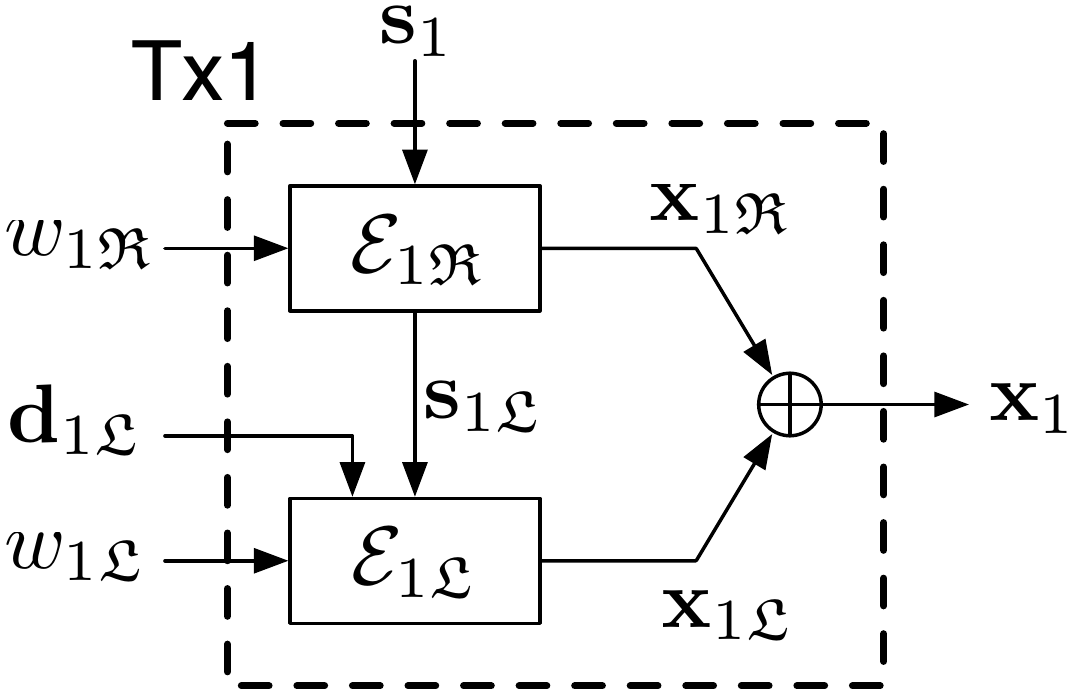}
\caption{Encoder Architecture at Tx1 (without Cooperation)}
\label{fig_ENCwoCoop}
}
\end{figure}

{\flushleft \it Decoding}\par
{\flushleft 1) Layer $\mfrak{L}$}:
Decoder $\mcal{D}_{\mfrak{L}}$ takes the input $\mb{y}$, rewritten as
\begin{align*}
\mb{y} 
&= \mb{x}_{1\mfrak{L}} + \mb{x}_{2\mfrak{L}} + \mb{s}_{1\mfrak{L}} + \mb{s}_{2\mfrak{L}} + \mb{z}_{\mfrak{L}},
\end{align*}
where $\{\mb{s}_{i\mfrak{L}},\ i=1,2\}$ are defined as above and $\mb{z}_{\mfrak{L}} = \mb{z}$.

We follow the same decoding procedure and the same line of analysis as \cite{PhilosofZamir_07} and \cite{ErezShamai_05}. The main idea is to transform the original channel into a modulo-lattice MAC, by first multiplying $\mb{y}$ by $\alpha_{\mfrak{L}}$, adding the dithers back, and taking modulo $\Lambda_{\mfrak{L}}$. The resulting output
\begin{align*}
\mb{y}_{\mfrak{L}} &= \lb \mb{y} - \lp1-\alpha_{\mfrak{L}}\rp \mb{y} + \mb{d}_{1\mfrak{L}}+\mb{d}_{2\mfrak{L}}\rb \bmod\Lambda_{\mfrak{L}}\\
&\overset{\aaaa}{=} \lb \mb{v}_{1\mfrak{L}}+ \mb{v}_{2\mfrak{L}} + \mb{z}_{\mathrm{eff},\mfrak{L}}
\rb \bmod\Lambda_{\mfrak{L}},
\end{align*} 
where $\mb{z}_{\mathrm{eff},\mfrak{L}} = \alpha_{\mfrak{L}}\mb{z}_{\mfrak{L}} - (1-\alpha_{\mfrak{L}})(\mb{x}_{1\mfrak{L}} + \mb{x}_{2\mfrak{L}})$. (a) is due to the distributive property of modulo-lattice operation. Due to dithering, $\mb{z}_{\mathrm{eff},\mfrak{L}}$ and $\mb{v}_{1\mfrak{L}}, \mb{v}_{2\mfrak{L}}$ are independent \cite{PhilosofZamir_07}. Hence the achievable rates for the modulo-lattice MAC are nonnegative $\lp R_{1\mfrak{L}}, R_{2\mfrak{L}}\rp$ satisfying \cite{PhilosofZamir_07}
\begin{align*}
R_{1\mfrak{L}}+R_{2\mfrak{L}} &\le \frac{1}{2}\log^+\lp \frac{\Theta_{\mfrak{L}}}{\alpha_{\mfrak{L}}^2N_o + (1-\alpha_{\mfrak{L}})^22\Theta_{\mfrak{L}}}\rp.
\end{align*}
This justifies the choice of $\alpha_{\mfrak{L}}$ being the MMSE coefficient $\frac{2\Theta_{\mfrak{L}}}{2\Theta_{\mfrak{L}}+N_o}$, which minimizes the effective noise variance. With this choice the achievable rates are
\begin{align}
R_{1\mfrak{L}}+R_{2\mfrak{L}} &\le \frac{1}{2}\log^+\lp \frac{1}{2} + \SNR_2\rp\label{eq_RateLattice2}.
\end{align}

{\flushleft 2) Layer $\mfrak{R}$}:
Decoder $\mcal{D}_{\mfrak{R}}$ takes the input $\mb{y}$, rewritten as
\begin{align*}
\mb{y} 
&= \mb{x}_{1\mfrak{R}} + \mb{s}_{1\mfrak{R}} + \mb{z}_{\mfrak{R}},
\end{align*}
where $\mb{s}_{1\mfrak{R}} = \mb{s}_1$ and $\mb{z}_{\mfrak{R}} = \mb{x}_{1\mfrak{L}} + \mb{x}_{2\mfrak{L}} + \mb{s}_2 + \mb{z}$.

Note that due to \emph{independent} dithering, $\{\mb{x}_{1\mfrak{L}},\mb{x}_{1\mfrak{L}},\mb{s}_1,\mb{s}_2\}$ are mutually independent. Hence the aggregate $\mb{x}_{1\mfrak{L}} + \mb{x}_{2\mfrak{L}} + \mb{s}_2$ is indeed independent of the encoding auxiliary codeword in dirty-paper coding, and its variance is $2\Theta_{\mfrak{L}}+Q_2 = 2P_2+Q_2$. We conclude that $\mb{z}_{1\mfrak{R}}$ is independent of $\mb{x}_{1\mfrak{R}}$ and $\mb{s}_{1\mfrak{R}}$, and its variance is $N_o+2P_2+Q_2$. The following claim establishes a lower bound on the achievable rate when we apply Costa's scheme to a dirty-paper channel where the additive noise is non-Gaussian:
\begin{claim}\label{claim_1}
Consider a dirty-paper channel $Y = X+S+Z$ where $X$ has power constraint $P$, $S$ is zero-mean Gaussian with variance $Q$, $\Var\lb Z\rb = N_Z$, and $\lbp X, S, Z\rbp$ are mutually independent. $S$ is known non-causally to the transmitter. Then the achievable rate is at least $\mcal{C}\lp P/N_Z\rp$.
\end{claim}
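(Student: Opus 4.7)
The plan is to run Costa's original dirty-paper scheme unchanged, i.e.\ Gelfand--Pinsker coding with auxiliary $U = X + \alpha S$ where $X\sim\mcal{N}(0,P)$ is independent of $\lbp S,Z\rbp$, and to show via a maximum-entropy argument that the non-Gaussianity of $Z$ does no harm to the lower bound. Since the Gelfand--Pinsker inner bound $I(U;Y) - I(U;S)$ is achievable for any channel satisfying $U-(X,S)-Y$, no distributional assumption on $Z$ is needed for achievability of this mutual-information expression; only its evaluation is affected.

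First I would rewrite the rate expression as
\begin{align*}
I(U;Y) - I(U;S) = h(U|S) - h(U|Y) = h(X) - h(U|Y),
\end{align*}
using $U|S \sim \mcal{N}(\alpha S, P)$ from the independence of $X$ and $S$. Next, observe that $U - \alpha Y = (1-\alpha)X - \alpha Z$, so by ``conditioning reduces entropy''
\begin{align*}
h(U|Y) \le h\!\lp (1-\alpha)X - \alpha Z\rp.
\end{align*}
Since $X$ and $Z$ are independent with variances $P$ and $N_Z$, the variance of $(1-\alpha)X - \alpha Z$ is $(1-\alpha)^2 P + \alpha^2 N_Z$, and the maximum-entropy principle then gives
\begin{align*}
h\!\lp (1-\alpha)X - \alpha Z\rp \le \tfrac{1}{2}\log\!\lp 2\pi e\lp(1-\alpha)^2 P + \alpha^2 N_Z\rp\rp.
\end{align*}
Combining this with $h(X) = \tfrac{1}{2}\log(2\pi e P)$ yields the lower bound $\tfrac{1}{2}\log\frac{P}{(1-\alpha)^2 P + \alpha^2 N_Z}$ on the achievable rate. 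Choosing $\alpha$ equal to the MMSE coefficient $P/(P+N_Z)$ minimizes the denominator to $PN_Z/(P+N_Z)$, producing exactly $\mcal{C}\lp P/N_Z\rp$.

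The only potentially subtle step is the maximum-entropy bound: it is loose whenever $Z$ is non-Gaussian, but because we are lower-bounding the Gelfand--Pinsker rate, this looseness works in our favor, and the MMSE choice of $\alpha$ still recovers the clean Gaussian dirty-paper rate. The conceptual takeaway is that Costa's scheme is universal over the additive-noise distribution as long as the second moment matches; in the application here the ``noise'' $\mb{z}_{\mfrak{R}} = \mb{x}_{1\mfrak{L}}+\mb{x}_{2\mfrak{L}}+\mb{s}_2+\mb{z}$ is manifestly non-Gaussian but has finite variance $2P_2+Q_2+N_o$, and the claim lets us treat it as if it were Gaussian with that variance.
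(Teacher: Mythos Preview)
Your proof is correct and uses the same Costa auxiliary $U=X+\alpha S$ with the same MMSE choice of $\alpha$ as the paper, but you reach the conclusion by a more elementary route. The paper first writes $X+S=rU+Z'$ via the MMSE decomposition (so $Y=rU+Z'+Z$ becomes an additive-noise channel with Gaussian input $U$), then invokes the Diggavi--Cover worst-case-noise result \cite{DiggaviCover_01} to argue that $I(U;Y)$ is minimized when $Z$ is Gaussian, and finally cites Costa's evaluation. You bypass both the MMSE decomposition and the worst-case-noise lemma: the identity $U-\alpha Y=(1-\alpha)X-\alpha Z$ together with ``conditioning reduces entropy'' and the scalar maximum-entropy bound gives the same inequality in one line. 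Your argument is self-contained and makes explicit that the only property of $Z$ used is its second moment; the paper's version is slightly higher-level but relies on an external lemma. Either way, the key point---that Costa's scheme is robust to the noise distribution---is the same.
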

\begin{proof}
We use an argument similar to the worst-case noise property of Gaussian noise in additive noise channel \cite{DiggaviCover_01}. See Appendix \ref{app_PfClaim1} for detail.
\end{proof}

Using the above claim, we have the achievable dirty-paper coding rate \cite{Costa_83} which is evaluated assuming $\mb{z}_{1\mfrak{R}}$ is Gaussian: $R_{1\mfrak{R}} \ge 0$ satisfying
\begin{align*}
R_{1\mfrak{R}} \le \mcal{C}\lp \frac{\Theta_{\mfrak{R}}}{N_o+2P_2+Q_2}\rp
=\mcal{C}\lp \frac{\SNR_1 - \SNR_2}{1+2\SNR_2+\INR_2}\rp.
\end{align*}



Plug in $R_1 = R_{1\mfrak{L}}+R_{1\mfrak{R}}$ and $R_2 = R_{2\mfrak{L}}$ and apply Fourier-Motzkin elimination to the above achievable rate regions, we establish the achievability result in Lemma \ref{lem_InnBdwoCoop}.

We conclude this section by two remarks.
\begin{remark}
The resultant achievable region does not have a constraint on the individual rate $R_1$. This is due to the geometric structure of the achievable region (a triangle rather than a pentagon) in the lattice layer $\mfrak{L}$. 
\end{remark}

\begin{remark}
The novelty of the proposed scheme compared with that in \cite{PhilosofZamir_07} is the additional layer $\mfrak{R}$ of user 1's code and the idea of using lattice precoding to remove layer $\mfrak{R}$ in decoder $\mcal{D}_{\mfrak{L}}$. Therefore, decoders $\mcal{D}_{\mfrak{R}}$ and $\mcal{D}_{\mfrak{L}}$ can work \emph{in parallel}. In a general setting where the number of users $K\ge 2$, the same layered architecture with $K$ layers suffices to achieve the capacity region to within a constant gap, where the constant only depends on the number of users $K$. This result will be detailed in a follow-up paper.
\end{remark}

\section{With Cooperation}
With cooperation, we shall first derive the outer bounds by a slight modification of the previous arguments, taking the transmitter cooperation into account. Then we add one more layer into the previous layered strategy, which is induced by the cooperation from Tx2 to Tx1, and show that it achieves the outer bound to within a constant gap.

\subsection{Main Result}
The main result in this section is summarized as follows.

\begin{lemma}[Outer Bounds With Cooperation]\label{lem_OutBdCoop}
If nonnegative $(R_1,R_2)$ is achievable, it satisfies the following:
\begin{align}
R_1+R_2 &\le \mcal{C}\lp \SNR_1+\SNR_2+2\sqrt{\SNR_1\SNR_2}\rp \label{eq_OutCoopSum1}\\
R_1+R_2 &\le \lbp\begin{array}{l}\mcal{C}\lp \frac{1+\SNR_1+\SNR_2+2\sqrt{\SNR_1\SNR_2}}{\INR_2}\rp\\ + \mcal{C}\lp\SNR_2\rp + \C_{21}\end{array}\rbp\label{eq_OutCoopSum2}\\
R_2 &\le \mcal{C}\lp \SNR_2\rp + \C_{21}.\label{eq_OutCoopR2}
\end{align}
\end{lemma}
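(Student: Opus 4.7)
My plan is to mimic the proof of Lemma~\ref{lem_OutBdwoCoop} while injecting two cooperation-specific modifications: (i) the conferencing links permit the inputs to be correlated, and the block-averaged cross-second-moment is bounded by Cauchy--Schwarz as $\frac{1}{N}\sum_{t=1}^N|\E[x_{1,t}x_{2,t}]|\le\sqrt{P_1P_2}$; and (ii) the signal $t_{21}^N$ carries at most $N\C_{21}$ bits, so a Willems-style chain-rule decomposition converts it into an additive $\C_{21}$ slack whenever $w_2$-information is routed through $\mb{x}_1$. These account, respectively, for the $2\sqrt{\SNR_1\SNR_2}$ beamforming term inside the $\mcal{C}(\cdot)$ arguments and the $\C_{21}$ additive term in \eqref{eq_OutCoopR2}--\eqref{eq_OutCoopSum2}.

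For \eqref{eq_OutCoopSum1} I would reveal $(\mb{s}_1,\mb{s}_2)$ to the receiver as a genie and apply Fano exactly as in the proof of \eqref{eq_OuterSum1}; the Gaussian maximum-entropy inequality applied to $h(x_{1,t}+x_{2,t}+z_t)$, together with modification~(i) and Jensen's inequality, yields the claim. For \eqref{eq_OutCoopR2} I would start from Fano, reveal $(\mb{s}_1,\mb{s}_2)$, and write
\begin{align*}
I(w_2;\mb{y}\,|\,\mb{s}_1,\mb{s}_2) \le I(w_2;\mb{x}_1\,|\,w_1,\mb{s}_1,\mb{s}_2) + I(w_2;\mb{y}\,|\,\mb{s}_1,\mb{s}_2,\mb{x}_1),
\end{align*}
using the chain rule together with $w_2\perp(w_1,\mb{s}_1,\mb{s}_2)$ to insert $w_1$ into the first summand at no cost. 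Modification~(ii) bounds the first summand by $N\C_{21}$ since, conditioned on $(w_1,\mb{s}_1)$, $\mb{x}_1$ is a deterministic function of $t_{21}^N$; the second summand, upon subtracting the known $(\mb{s}_1,\mb{s}_2,\mb{x}_1)$ from $\mb{y}$, becomes $I(w_2;\mb{x}_2+\mb{z}\,|\,\cdots)$, which is at most $N\mcal{C}(\SNR_2)$ by Gaussian maximum entropy and the power constraint on $\mb{x}_2$. For \eqref{eq_OutCoopSum2} I would reveal only $\mb{s}_1$, split $I(w_1,w_2;\mb{y}\,|\,\mb{s}_1) = I(w_2;\mb{y}\,|\,\mb{s}_1) + I(w_1;\mb{y}\,|\,w_2,\mb{s}_1)$ by the chain rule, recycle the preceding step for the $w_2$-piece to extract $\mcal{C}(\SNR_2)+\C_{21}$, and recycle the non-cooperative bound \eqref{eq_OuterSum2} for the $w_1$-piece with modification~(i) supplying the $2\sqrt{\SNR_1\SNR_2}$ correction.

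The delicate step is modification~(ii). Although the target $H(\mb{x}_1\,|\,w_1,\mb{s}_1)\le H(t_{21}^N)\le N\C_{21}$ looks clean, $t_{21}^N$ may depend causally on $\mb{s}_2$ through prior conferencing exchanges, and the receiver already holds $\mb{s}_2$ as a genie. One must therefore condition on $(w_1,\mb{s}_1,\mb{s}_2)$ before invoking the link-capacity bound, verifying $H(\mb{x}_1\,|\,w_1,\mb{s}_1,\mb{s}_2)\le H(t_{21}^N\,|\,w_1,\mb{s}_1,\mb{s}_2)\le H(t_{21}^N)\le N\C_{21}$. This careful ordering keeps the cooperation contribution cleanly separated from the state-signaling contribution, and explains why the final outer bound depends only on $\C_{21}$ and not on $\C_{12}$.
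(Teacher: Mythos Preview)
Your arguments for \eqref{eq_OutCoopSum1} and \eqref{eq_OutCoopR2} are correct and essentially match the paper's (the paper is terser, simply invoking the cut-set bound for \eqref{eq_OutCoopR2}, while you spell out the Willems-style split; the content is the same).

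For \eqref{eq_OutCoopSum2}, however, your message-chain-rule decomposition does not deliver the bound. After revealing $\mb{s}_1$ and writing $I(w_1,w_2;\mb{y}\,|\,\mb{s}_1)=I(w_2;\mb{y}\,|\,\mb{s}_1)+I(w_1;\mb{y}\,|\,w_2,\mb{s}_1)$, the $w_2$-piece is fine, but the $w_1$-piece would need
\[
I(w_1;\mb{y}\,|\,w_2,\mb{s}_1)\le N\,\mcal{C}\!\left(\tfrac{1+\SNR_1+\SNR_2+2\sqrt{\SNR_1\SNR_2}}{\INR_2}\right),
\]
which in turn would require the lower bound $h(\mb{y}\,|\,w_1,w_2,\mb{s}_1)\ge \tfrac{N}{2}\log(2\pi e\,Q_2)$. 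This is not true in general: conditioned on $(w_1,w_2,\mb{s}_1)$, both $\mb{x}_1$ and $\mb{x}_2$ may depend on $\mb{s}_2$ (through $t_{21}$ and directly), so $\mb{x}_1+\mb{x}_2$ can partially cancel $\mb{s}_2$, pushing $h(\mb{y}\,|\,w_1,w_2,\mb{s}_1)$ down toward $h(\mb{z})=\tfrac{N}{2}\log(2\pi e\,N_o)$. ``Recycling \eqref{eq_OuterSum2}'' does not help here, because that bound is a \emph{sum-rate} bound derived via a state-chain-rule expansion, not a bound on the $w_1$-piece alone.

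The paper's route is different and is what makes the $\INR_2$ in the denominator appear: it keeps the Philosof--Zamir expansion $I(w_1,w_2;\mb{y})= -h(\mb{z})-h(\mb{s}_1,\mb{s}_2)+h(\mb{y})+h(\mb{s}_1,\mb{s}_2\,|\,w_1,w_2,\mb{y})$ intact, then chains on the \emph{states} (not the messages). The $Q_2$ in the denominator comes from the subtracted term $-h(\mb{s}_2)$. The only cooperation-specific twist is to append $t_{21}^N$ to the conditioning before identifying $h(\mb{s}_2\,|\,w_1,w_2,\mb{y},\mb{s}_1,t_{21}^N)=h(\mb{x}_2+\mb{z}\,|\,\ldots)$, at the cost of $H(t_{21}^N)\le N\C_{21}$. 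Your modification~(ii) is exactly the right tool, but it needs to be applied inside the state-chain-rule decomposition, not the message-chain-rule one.
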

\begin{proof}
See Appendix \ref{app_PfOutBdCoop}.
\end{proof}

\begin{lemma}[Achievable Rate With Cooperation]\label{lem_InnBdCoop}
If nonnegative $\lp R_1,R_2\rp$ satisfies the following, it is achievable.
\begin{align}
R_1+R_2 &\le \frac{1}{2}\log^+\lp \frac{1}{2} + \SNR_2\rp + \lp \mcal{C}\lp \frac{\Theta_{\mfrak{C}}}{N_o+2P_2}\rp-\frac{1}{2} \rp^+\notag\\
&\quad + \mcal{C}\lp \frac{P_1-\Theta_{\mfrak{C}}-P_2}{N_o + \Theta_{\mfrak{C}}+2P_2+Q_2}\rp \label{eq_InnCoopSum}\\
R_2 &\le \frac{1}{2}\log^+\lp \frac{1}{2} + \SNR_2\rp + \lp \mcal{C}\lp \frac{\Theta_{\mfrak{C}}}{N_o+2P_2}\rp-\frac{1}{2} \rp^+ \notag\\
&\quad +\lp\C_{21} - r_{21}\rp. \label{eq_InnCoopR2}
\end{align}
Here we choose
\begin{align*}
\Theta_{\mfrak{C}} = \min\lbp \lp N_o+2P_2\rp \lp2^{2\C_{21}} -2\rp^+, Q_2, P_1-P_2\rbp, 
\end{align*}
and $r_{21} = \mcal{C}\lp 1+\frac{\Theta_{\mfrak{C}}}{N_o+2P_2}\rp \mbb{I}\lbp \C_{21}\ge\frac{1}{2}\rbp$.
\end{lemma}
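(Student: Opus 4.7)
The plan is to extend the two-layer $(\mfrak{R},\mfrak{L})$ scheme of Lemma~\ref{lem_InnBdwoCoop} by inserting a middle cooperation layer $\mfrak{C}$, as outlined in Section~\ref{subsec_Main}. I decompose $w_i = (w_{i\mfrak{R}}, w_{i\mfrak{C}}, w_{i\mfrak{L}})$ for $i=1,2$, split the encoders and decoders accordingly, and superpose $\mb{x}_1 = \mb{x}_{1\mfrak{R}} + \mb{x}_{1\mfrak{C}} + \mb{x}_{1\mfrak{L}}$ and $\mb{x}_2 = \mb{x}_{2\mfrak{L}}$. Tx1 allocates its power as $\Theta_{\mfrak{L}}=P_2$, $\Theta_{\mfrak{C}}$, and $\Theta_{\mfrak{R}}=P_1-\Theta_{\mfrak{C}}-P_2$, while Tx2 spends all of its power on $\mfrak{L}$. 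The cooperation link is split as $\C_{21}=r_{21}+(\C_{21}-r_{21})$: the first piece carries a compression index that supports $\mfrak{C}$, and the second piece carries digital bits that Tx1 relays in $\mfrak{R}$.

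Layer $\mfrak{L}$ is executed verbatim as in Lemma~\ref{lem_InnBdwoCoop}, yielding $R_{1\mfrak{L}}+R_{2\mfrak{L}}\le \frac{1}{2}\log^+\lp \frac{1}{2}+\SNR_2\rp$; the only change is that the effective noise at $\mcal{D}_{\mfrak{L}}$ now also contains the layer-$\mfrak{C}$ and layer-$\mfrak{R}$ signals, which are handled by Claim~\ref{claim_1}'s worst-case-Gaussian argument. For layer $\mfrak{C}$, Tx2 first forms $\mb{x}_{2\mfrak{L}}$ and then source-codes it at rate $r_{21}$ using a nested-lattice, Wyner--Ziv-style quantizer, transmitting the index over the cooperation link; Tx1 reconstructs $\widehat{\mb{x}}_{2\mfrak{L}}$. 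The choice $r_{21}=\mcal{C}\lp 1+\frac{\Theta_{\mfrak{C}}}{N_o+2P_2}\rp$ is calibrated precisely so that the residual quantization error blends with the ambient noise at the SNR targeted by $\mcal{D}_{\mfrak{C}}$. Tx1 then applies dithered lattice precoding with power $\Theta_{\mfrak{C}}$ against the now-known aggregate interference $\mb{s}_1+\mb{x}_{1\mfrak{R}}+\widehat{\mb{x}}_{2\mfrak{L}}$, and an Erez--Shamai MMSE analysis gives the layer-$\mfrak{C}$ rate $\lp\mcal{C}\lp\frac{\Theta_{\mfrak{C}}}{N_o+2P_2}\rp-\frac{1}{2}\rp^+$, with the $-\frac{1}{2}$ absorbing the quantization gap and the positive-part clip covering the regime $\C_{21}<\frac{1}{2}$, where the compression overhead cannot be paid and $\Theta_{\mfrak{C}}=0$ (hence the indicator in $r_{21}$). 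The three-way minimum in $\Theta_{\mfrak{C}}$ captures the tightest of: (i) the cooperation bitrate, (ii) the interference scale $Q_2$ beyond which the precoding gain saturates, and (iii) Tx1's residual power above the $P_2$ already committed to $\mfrak{L}$. The layer-$\mfrak{C}$ rate can then be freely allocated between users~1 and~2.

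Layer $\mfrak{R}$ is the direct analogue of the no-cooperation top layer: Tx1 applies Costa-style dirty-paper coding against $\mb{s}_1$ with power $\Theta_{\mfrak{R}}$. Its effective noise $\mb{x}_{1\mfrak{L}}+\mb{x}_{2\mfrak{L}}+\mb{x}_{1\mfrak{C}}+\mb{s}_2+\mb{z}$ has variance $N_o+2P_2+\Theta_{\mfrak{C}}+Q_2$, and Claim~\ref{claim_1} yields the sum-rate bound $R_{1\mfrak{R}}+R_{2\mfrak{R}}\le \mcal{C}\lp\frac{P_1-\Theta_{\mfrak{C}}-P_2}{N_o+\Theta_{\mfrak{C}}+2P_2+Q_2}\rp$, together with the side constraint $R_{2\mfrak{R}}\le \C_{21}-r_{21}$ from the residual cooperation pipe through which Tx1 relays user-2 bits. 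Because every layer is either lattice- or dirty-paper-precoded against all higher-layer signals, decoders $\mcal{D}_{\mfrak{R}}$, $\mcal{D}_{\mfrak{C}}$, $\mcal{D}_{\mfrak{L}}$ can operate in parallel, and a Fourier--Motzkin elimination on $R_i=R_{i\mfrak{R}}+R_{i\mfrak{C}}+R_{i\mfrak{L}}$ subject to all per-layer constraints collapses precisely to \eqref{eq_InnCoopSum}--\eqref{eq_InnCoopR2}.

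The main obstacle is the rigorous analysis of layer $\mfrak{C}$: one must track the independence structure after the dithered nested-lattice quantization of $\mb{x}_{2\mfrak{L}}$, verify that Tx1's distorted knowledge of $\mb{x}_{2\mfrak{L}}$ still supports the Erez--Shamai lattice-precoding guarantee up to a $\frac{1}{2}$-bit compression loss, and confirm that the prescribed $r_{21}$ together with the three-way minimum defining $\Theta_{\mfrak{C}}$ indeed exhausts the cooperation-layer potential in every regime of $(\C_{21},P_1,P_2,Q_2,N_o)$. The layer-$\mfrak{L}$ and layer-$\mfrak{R}$ steps are routine adaptations of the no-cooperation argument together with Claim~\ref{claim_1}.
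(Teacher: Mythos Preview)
Your layer-$\mfrak{C}$ description contains a genuine gap. You have Tx2 compress $\mb{x}_{2\mfrak{L}}$ and Tx1 precode against $\mb{s}_1+\mb{x}_{1\mfrak{R}}+\what{\mb{x}}_{2\mfrak{L}}$. But this does nothing about $\mb{s}_2$: only Tx2 knows $\mb{s}_2$, and in your scheme Tx2 never precodes against it at the $\mfrak{C}$-layer scale, so the decoder $\mcal{D}_{\mfrak{C}}$ still sees $\mb{s}_2$ with full variance $Q_2$ in its effective noise and the claimed rate $\lp\mcal{C}\lp\Theta_{\mfrak{C}}/(N_o+2P_2)\rp-\tfrac12\rp^+$ is unjustified. (Note that $\mb{x}_{2\mfrak{L}}$ is uniform over $\mcal{V}_{\mfrak{L}}$ by dithering and carries no recoverable information about $\mb{s}_2$.) In the paper the object Tx2 compresses is \emph{not} $\mb{x}_{2\mfrak{L}}$ but a fresh layer-$\mfrak{C}$ precoded signal $\mb{x}_{2\mfrak{C}}=\lb\mb{v}_{2\mfrak{C}}-\alpha_{\mfrak{C}}\mb{s}_2\rb\bmod\Lambda_{\mfrak{C}}$, which carries both user~2's $\mfrak{C}$-message \emph{and} the precoding against $\mb{s}_2$ at the correct MMSE scale. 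Tx1 then embeds the quantized $\what{\mb{x}}_{2\mfrak{C}}$ additively inside its own lattice precoding, $\mb{x}_{1\mfrak{C}}=\lb\mb{v}_{1\mfrak{C}}+\what{\mb{x}}_{2\mfrak{C}}-\alpha_{\mfrak{C}}(\mb{s}_1+\mb{x}_{1\mfrak{R}})-\mb{d}_{1\mfrak{C}}\rb\bmod\Lambda_{\mfrak{C}}$, so that the modulo step at $\mcal{D}_{\mfrak{C}}$ cancels $\mb{s}_1$ and $\mb{s}_2$ simultaneously, leaving effective noise $\what{\mb{z}}+\alpha_{\mfrak{C}}(\mb{x}_{1\mfrak{L}}+\mb{x}_{2\mfrak{L}}+\mb{z})-(1-\alpha_{\mfrak{C}})\mb{x}_{1\mfrak{C}}$. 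This distributed cancellation of $\mb{s}_2$ via compress-and-forward of a \emph{precoded} signal is the missing idea; compressing the bottom-layer transmit signal cannot replace it.

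A secondary correction: in layer $\mfrak{L}$ the higher-layer signals $\mb{x}_{1\mfrak{C}},\mb{x}_{1\mfrak{R}}$ are not treated as noise (and hence not ``handled by Claim~\ref{claim_1}''); they are absorbed into $\mb{s}_{1\mfrak{L}}=\mb{s}_1+\mb{x}_{1\mfrak{R}}+\mb{x}_{1\mfrak{C}}$ and precoded away exactly as in the no-cooperation proof, so the modulo-lattice MAC seen by $\mcal{D}_{\mfrak{L}}$ is unchanged and Claim~\ref{claim_1} plays no role there.
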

\begin{proof}
Achievability will be detailed in this section.
\end{proof}

\begin{theorem}[Constant Gap to Optimality]\label{thm_ConstGapCoop}
The above inner and outer bounds are within $(3,1.5)$ bits for $(R_1,R_2)$.
\end{theorem}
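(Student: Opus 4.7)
The plan is to bound the discrepancy between the outer bounds in Lemma \ref{lem_OutBdCoop} and the inner bounds in Lemma \ref{lem_InnBdCoop} by comparing the sum-rate and $R_2$ constraints term by term, then extracting the $R_1$ gap. Two bit-counting observations will be used repeatedly. First, $\mcal{C}(\SNR_2)-\frac{1}{2}\log^+(\frac{1}{2}+\SNR_2)\le\frac{1}{2}$, because $\frac{1+\SNR_2}{\frac{1}{2}+\SNR_2}\le 2$. Second, $\mcal{C}(\SNR_1+\SNR_2+2\sqrt{\SNR_1\SNR_2})-\mcal{C}(\SNR_1+\SNR_2)\le\frac{1}{2}$ bit, since $(\sqrt{\SNR_1}+\sqrt{\SNR_2})^2\le 2(\SNR_1+\SNR_2)$; the same $\frac{1}{2}$-bit slack applies to the numerator in \eqref{eq_OutCoopSum2}. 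Hence the cooperation outer bound exceeds its no-cooperation counterpart by at most $\frac{1}{2}$ bit plus the explicit $\C_{21}$ appearing in \eqref{eq_OutCoopSum2} and \eqref{eq_OutCoopR2}.

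Next I would partition the parameter space according to which of the three arguments of the $\min$ defining $\Theta_{\mfrak{C}}$ is active. In the small-cooperation regime $\C_{21}<\frac{1}{2}$ we have $\Theta_{\mfrak{C}}=0$ and $r_{21}=0$, so the inner bound collapses to the no-cooperation bound of Lemma \ref{lem_InnBdwoCoop}. Theorem \ref{thm_ConstGapwoCoop} then gives a $(1,0.5)$-bit gap to the no-cooperation outer bound, to which we add at most $\C_{21}+\frac{1}{2}<1$ bit, yielding a gap of at most $(2,1.5)$ bits.

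For $\C_{21}\ge\frac{1}{2}$, the cooperation layer is active with $r_{21}=\mcal{C}(1+\Theta_{\mfrak{C}}/(N_o+2P_2))$, and I would sub-divide according to whether $\Theta_{\mfrak{C}}$ equals the cooperation cap $(N_o+2P_2)(2^{2\C_{21}}-2)$, the interference $Q_2$, or the residual power $P_1-P_2$. In the cooperation-limited sub-case, $\mcal{C}(\Theta_{\mfrak{C}}/(N_o+2P_2))$ is within a constant of $\C_{21}-\frac{1}{2}$, so the middle term in \eqref{eq_InnCoopSum} and \eqref{eq_InnCoopR2} nearly matches the explicit $\C_{21}$ in the outer bounds. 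In the interference-limited sub-case ($\Theta_{\mfrak{C}}=Q_2$), the last term in \eqref{eq_InnCoopSum} approaches $\mcal{C}((\SNR_1-\SNR_2)/\INR_2)$, pairing with the first term of \eqref{eq_OutCoopSum2}, while the cooperation and lattice layers together match $\mcal{C}(\SNR_2)+\C_{21}$ up to the usual $\frac{1}{2}$-bit penalty from $(\cdot)^+-\frac{1}{2}$. In the power-limited sub-case ($\Theta_{\mfrak{C}}=P_1-P_2$), the top Gaussian-binning layer vanishes and the inner sum rate saturates near \eqref{eq_OutCoopSum1}; comparing directly to \eqref{eq_OutCoopSum1} and \eqref{eq_OutCoopR2} closes the gap.

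Finally, a sum-rate gap $\Delta_{\mathrm{sum}}$ together with an $R_2$ gap $\Delta_2$ translates into an $R_1$ gap of at most $\Delta_{\mathrm{sum}}$: given any outer corner $(R_1^o,R_2^o)$, the point $(R_1^o-\Delta_{\mathrm{sum}},R_2^o-\Delta_2)$ lies inside the inner polytope. Tallying the slack across the cases produces the claimed $(3,1.5)$-bit gap. The main obstacle I foresee lies in the interference-limited sub-case where $\Theta_{\mfrak{C}}=Q_2$: three quantities simultaneously compete in the $\min$, and both the lattice/cooperation layers and the top dirty-paper layer trade against $\INR_2$, so the $\frac{1}{2}$-bit penalties from $(\cdot)^+-\frac{1}{2}$ and the $\sqrt{\SNR_1\SNR_2}$ slack in the outer bound must be chased carefully to stay within the targeted constants. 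The detailed constant accounting would be relegated to an appendix.
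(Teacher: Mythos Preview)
Your proposal is correct and follows essentially the same route as the paper: the same partition into $\C_{21}<\tfrac{1}{2}$ versus $\C_{21}\ge\tfrac{1}{2}$, the same three sub-cases indexed by which argument of the $\min$ defining $\Theta_{\mfrak{C}}$ is active, and the same half-bit slacks from the lattice layer and the $2\sqrt{\SNR_1\SNR_2}$ cross term. The paper additionally splits the cooperation-limited sub-case into $2\INR_2\gtrless 1+2\SNR_2$ to decide whether to compare against \eqref{eq_OutCoopSum1} or \eqref{eq_OutCoopSum2}, which is the step that actually produces the tight $3$-bit figure; your sketch would arrive at the same place once the constants are tracked.
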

\begin{proof}
We combine Lemma \ref{lem_OutBdCoop} and \ref{lem_InnBdCoop} to obtain the theorem. See Appendix \ref{app_PfConstGapCoop} for detailed gap analysis.
\end{proof}

\subsection{Achievability}
We shall only make use of the link from Tx2 to Tx1, as suggested by the outer bounds. In addition to the above mentioned layers $\mfrak{R}$ and $\mfrak{L}$, due to the cooperation we introduce a third middle layer $\mfrak{C}$, which denotes \emph{cooperation}. Decompose messages $w_1$ into $\lp w_{1\mfrak{R}}, w_{1\mfrak{C}}, w_{1\mfrak{L}}\rp$ and $w_2$ into $\lp w_{2\mfrak{R}}, w_{2\mfrak{C}}, w_{2\mfrak{L}}\rp$. We keep the encoder and decoder architecture as in the case without cooperation, except that now we split the encoder at Tx1, $\mcal{E}_1$, into three sub-encoders $\lp \mcal{E}_{1\mfrak{R}}, \mcal{E}_{1\mfrak{C}}, \mcal{E}_{1\mfrak{L}}\rp$, and split the decoder at Rx, $\mcal{D}$, into three sub-decoders $\lp \mcal{D}_{\mfrak{R}}, \mcal{D}_{\mfrak{C}}, \mcal{D}_{\mfrak{L}}\rp$. Encoder $\mcal{E}_{1\mfrak{C}}$ outputs signal $\mb{x}_{1\mfrak{C}}$. Tx1 sends out the superposition of $\mb{x}_{1\mfrak{R}}$, $\mb{x}_{1\mfrak{C}}$, and $\mb{x}_{1\mfrak{L}}$. Hence the receive signal can be written as
\begin{align*}
\mb{y} = \mb{x}_{1\mfrak{R}} + \mb{x}_{1\mfrak{C}} + \mb{x}_{1\mfrak{L}} + \mb{x}_{2\mfrak{L}} + \mb{s}_1 + \mb{s}_2 + \mb{z}.
\end{align*}

{\flushleft \it Encoding}\par
{\flushleft 1) Layer $\mfrak{L}$}:
We use the same scheme as in the case without cooperation. The only difference is the effective interference known to Tx1 in this layer becomes $\mb{s}_{1\mfrak{L}} = \mb{s}_1 + \mb{x}_{1\mfrak{R}} + \mb{x}_{1\mfrak{C}}$.

{\flushleft 2) Layer $\mfrak{C}$}:
In this layer, we use a lattice $\Lambda_{\mfrak{C}}$ with second moment $\Theta_{\mfrak{C}}$ and basic Voronoi region $\mcal{V}_{\mfrak{C}}$ to modulate $w_{1\mfrak{C}}$ and $w_{2\mfrak{C}}$. Generate random independent codebooks of sizes $2^{NR_{1\mfrak{C}}}$ and $2^{NR_{2\mfrak{C}}}$ according to $\mathrm{Unif}\lp\mcal{V}_{\mfrak{C}}\rp$ for $w_{1\mfrak{C}}$ and $w_{2\mfrak{C}}$ respectively. Let the codewords be $\mb{v}_{1\mfrak{C}}$ and $\mb{v}_{2\mfrak{C}}$. Tx1 and Tx2 \emph{would} transmit $\mb{x}_{1\mfrak{C}}$ and $\mb{x}_{2\mfrak{C}}$ respectively, using the following modulo-lattice operation, if they \emph{had} enough power:
\begin{align*}
\mb{x}_{i\mfrak{C}} = \lb \mb{v}_{i\mfrak{C}} - \alpha_{\mfrak{C}}\mb{s}_{i\mfrak{C}} - \mb{d}_{i\mfrak{C}}\rb \bmod\Lambda_{\mfrak{C}},\ i=1,2,
\end{align*}
where $\mb{d}_{i\mfrak{C}}$'s are dithers, $\mb{s}_{i\mfrak{C}}$'s are effective interferences known to transmitters, and $\alpha_{\mfrak{C}}$ is the MMSE coefficient. However, since Tx2 has no power left (recall that in layer $\mfrak{L}$ Tx2 has already used up its power), user 2's precoded signal has to be transmitted by Tx1 via \emph{cooperation}. Therefore, dither $\mb{d}_{2\mfrak{C}}$ is no longer needed because in this layer the received signal is solely contributed by Tx1.

The effective interference $\mb{s}_{2\mfrak{C}} = \mb{s}_2$. Tx2 first compresses
\begin{align}\label{eq_def}
\mb{x}_{2\mfrak{C}} :=  \lb \mb{v}_{2\mfrak{C}} - \alpha_{\mfrak{C}}\mb{s}_{2\mfrak{C}} \rb \bmod\Lambda_{\mfrak{C}}
\end{align}
using a Gaussian vector quantizer: $\what{\mb{x}}_{2\mfrak{C}} = \mb{x}_{2\mfrak{C}} + \what{\mb{z}}$, and $\what{\mb{z}} \sim \mcal{N}(\mb{0},\Delta \mb{I}_N)$ is independent of everything else. $\Delta$ denotes the quantization distortion. Note that the rate for Tx1 to recover $\what{\mb{x}}_{2\mfrak{C}}$ reliably is upper bounded by the rate distortion function assuming $\what{\mb{x}}_{2\mfrak{C}}$ is Gaussian, since Gaussian distribution is the differential entropy maximizing distribution under power constraint. Let the rate for sending the compression index be $r_{21}$, $r_{21}\le\C_{21}$. Hence, we have the following criterion:
\begin{align}
r_{21} \ge \frac{1}{2}\log\lp 1+\frac{\Theta_{\mfrak{C}}}{\Delta}\rp. \label{eq_QRate}
\end{align}
We shall set $r_{21}$ such that the above holds with equality if $\C_{21} \ge \frac{1}{2}\log\lp 1+\frac{\Theta_{\mfrak{C}}}{\Delta}\rp$. If not, we simply drop this layer by setting $r_{21}=\Theta_{\mfrak{C}}=0$. The value of $\Delta$ will be described later in the decoding part.

Tx2 then sends the quantization point $\what{\mb{x}}_{2\mfrak{C}}$ to Tx1. Encoder $\mcal{E}_{1\mfrak{C}}$ outputs
\begin{align*}
\mb{x}_{1\mfrak{C}} = \lb \mb{v}_{1\mfrak{C}} + \what{\mb{x}}_{2\mfrak{C}} - \alpha_{\mfrak{C}}\mb{s}_{1\mfrak{C}} - \mb{d}_{1\mfrak{C}}\rb \bmod\Lambda_{\mfrak{C}},
\end{align*}
where the effective interference $\mb{s}_{1\mfrak{C}} = \mb{s}_1 + \mb{x}_{1\mfrak{R}}$. The value of $\alpha_{\mfrak{C}}$ will be described later in the decoding part.

{\flushleft 3) Layer $\mfrak{R}$}:
Layer $\mfrak{R}$ is now shared between both users. Tx2 uses the rest of the cooperation capacity $\lp\C_{21}-r_{21}\rp$ to send message $w_{2\mfrak{R}}$ to Tx1. Tx1 uses the rest of the power, that is, $P_1 - \Theta_{\mfrak{C}} - P_2$, to encode messages $\lp w_{1\mfrak{R}},w_{2\mfrak{R}}\rp$, using dirty-paper coding or lattice strategies against interference $\mb{s}_1$.

The encoder architecture at Tx1 is depicted in Fig.~\ref{fig_ENCCoop}.
\begin{figure}[htbp]
{\center
\includegraphics[width=2in]{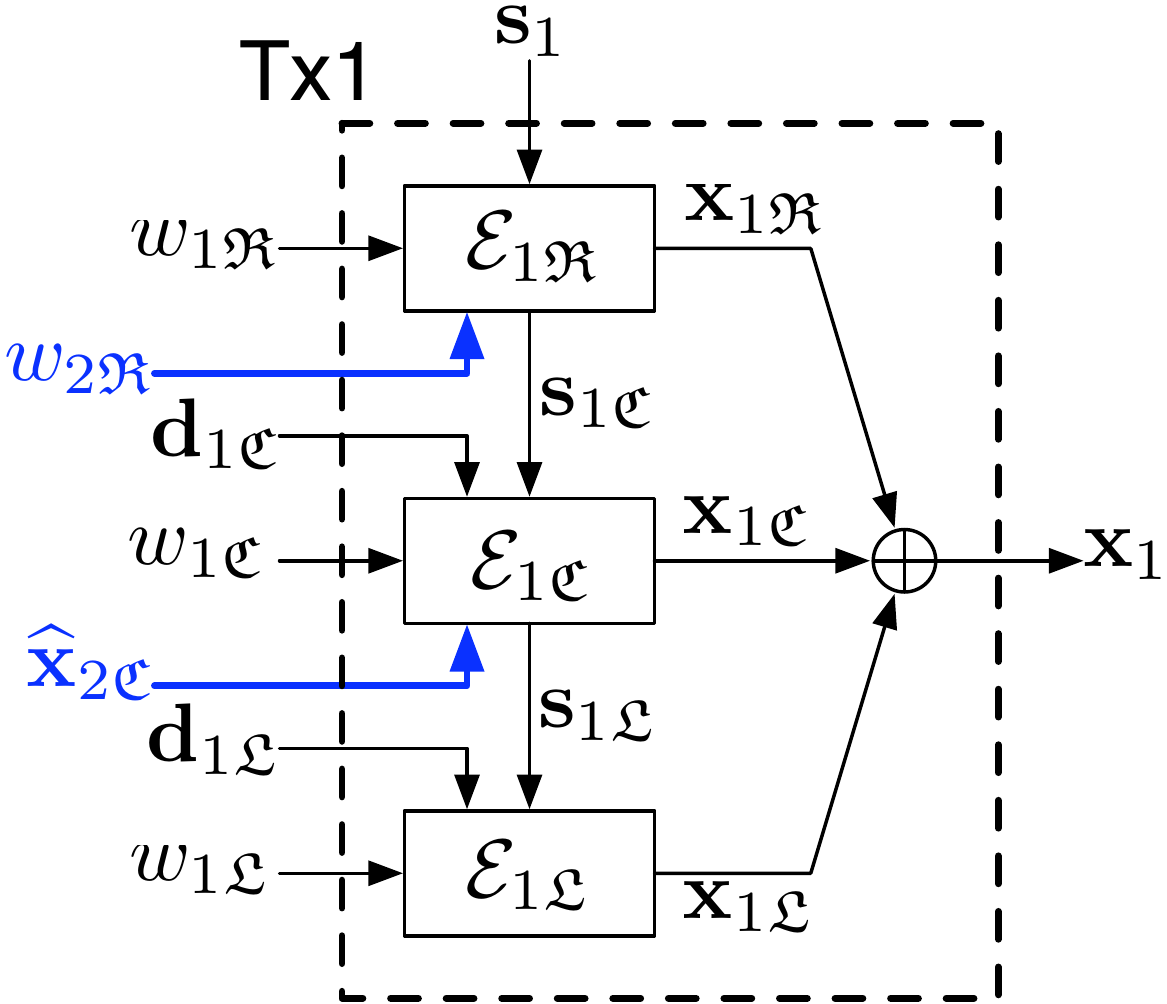}
\caption{Encoder Architecture at Tx1 (with Cooperation). Blue denotes the cooperation information}
\label{fig_ENCCoop}
}
\end{figure}

{\flushleft \it Decoding}\par
{\flushleft 1) Layer $\mfrak{L}$}:
Decoder $\mcal{D}_{\mfrak{L}}$ remains the same as in the case without cooperation, and achievable rates are described in \eqref{eq_RateLattice2}.

{\flushleft 2) Layer $\mfrak{C}$}:
Decoder $\mcal{D}_{\mfrak{C}}$ takes the input $\mb{y}$, rewritten as
\begin{align*}
\mb{y} = \mb{x}_{1\mfrak{C}} + \mb{s}_{1\mfrak{C}} + \mb{s}_{2\mfrak{C}} + \mb{z}_{\mfrak{C}},
\end{align*}
and employs the same decoding procedure as decoder $\mcal{D}_{\mfrak{L}}$. Here $\{\mb{s}_{i\mfrak{C}},\ i=1,2\}$ are defined as above and $\mb{z}_{\mfrak{C}} = \mb{x}_{1\mfrak{L}}+\mb{x}_{2\mfrak{L}}+\mb{z}$.

The equivalent modulo-lattice channel output is
\begin{align*}
\mb{y}_{\mfrak{C}} &= \lb \mb{y} - \lp1-\alpha_{\mfrak{C}}\rp \mb{y} + \mb{d}_{1\mfrak{C}} \rb \bmod\Lambda_{\mfrak{C}}\\
&= \lb \mb{v}_{1\mfrak{C}}+ \mb{v}_{2\mfrak{C}} + \mb{z}_{\mathrm{eff},\mfrak{C}}
\rb \bmod\Lambda_{\mfrak{C}},
\end{align*} 
where $\mb{z}_{\mathrm{eff},\mfrak{C}} = \what{\mb{z}} + \alpha_{\mfrak{C}}\mb{z}_{\mfrak{C}} - (1-\alpha_{\mfrak{C}})\mb{x}_{1\mfrak{C}}$. Hence the achievable rates are nonnegative $\lp R_{1\mfrak{C}}, R_{2\mfrak{C}}\rp$ satisfying
\begin{align*}
R_{1\mfrak{C}}+R_{2\mfrak{C}} &\le \frac{1}{2}\log^+\lp \frac{\Theta_{\mfrak{C}}}{\Delta + \alpha_{\mfrak{C}}^2(N_o+2P_2) + (1-\alpha_{\mfrak{C}})^2\Theta_{\mfrak{C}}}\rp
\end{align*}
We shall choose $\alpha_{\mfrak{C}} = \frac{\Theta_{\mfrak{C}}}{\Theta_{\mfrak{C}}+(N_o+2P_2)}$ to minimize the effective noise variance, which is now $\Delta + \frac{\Theta_{\mfrak{C}}(N_o+2P_2)}{\Theta_{\mfrak{C}}+(N_o+2P_2)}$. For simplicity we choose $\Delta = \frac{\Theta_{\mfrak{C}}(N_o+2P_2)}{\Theta_{\mfrak{C}}+(N_o+2P_2)}$, and hence the achievable rate region consists of nonnegative $\lp R_{1\mfrak{C}}, R_{2\mfrak{C}}\rp$ satisfying 
\begin{align*}
R_{1\mfrak{C}}+R_{2\mfrak{C}} &\le \lp \mcal{C}\lp \frac{\Theta_{\mfrak{C}}}{N_o+2P_2}\rp-\frac{1}{2} \rp^+.
\end{align*}

As for the values of $r_{21}$ and $\Theta_{\mfrak{C}}$, note that \eqref{eq_QRate} becomes $r_{21} \ge \mcal{C}\lp 1+\frac{\Theta_{\mfrak{C}}}{N_o+2P_2}\rp$ where the right-hand side is always lower bounded by $1/2$. Besides, in layer $\mfrak{R}$ the codeword in layer $\mfrak{C}$ is treated as noise, where the interference $s_2$ is also treated as noise. Hence, if $\C_{21}\ge 1/2$, we set $r_{21} = \mcal{C}\lp 1+\frac{\Theta_{\mfrak{C}}}{N_o+2P_2}\rp$, and $\Theta_{\mfrak{C}} = \min\lbp \lp N_o+2P_2\rp \lp2^{2\C_{21}} -2\rp, Q_2, P_1-P_2\rbp$. Otherwise, we set $r_{21} = \Theta_{\mfrak{C}} = 0$. 


{\flushleft 3) Layer $\mfrak{R}$}:
Decoder $\mcal{D}_{\mfrak{R}}$ uses the same procedure as in the case without cooperation to decode $\lp w_{1\mfrak{R}}, w_{2\mfrak{R}}\rp$. The only difference is the effective noise term $\mb{z}_{\mfrak{R}}$ becomes $\mb{x}_{1\mfrak{C}} + \mb{x}_{1\mfrak{L}} + \mb{x}_{2\mfrak{L}} + \mb{s}_2 + \mb{z}$.

Similar to the previous argument in the case without cooperation, we obtain the achievable rate in this layer: nonnegative $R_{1\mfrak{R}}, R_{2\mfrak{R}}$ satisfying
\begin{align*}
R_{1\mfrak{R}}+R_{2\mfrak{R}} &\le \mcal{C}\lp \frac{P_1-\Theta_{\mfrak{C}}-P_2}{N_o + \Theta_{\mfrak{C}}+2P_2+Q_2}\rp\\
R_{2\mfrak{R}} &\le \C_{21} - r_{21}.
\end{align*}

Plug in $R_1 = R_{1\mfrak{L}}+R_{1\mfrak{C}}+R_{1\mfrak{R}}$ and $R_2 = R_{2\mfrak{L}}+R_{2\mfrak{C}}+R_{2\mfrak{R}}$, and apply Fourier-Motzkin elimination to the above achievable rate regions, we establish Lemma \ref{lem_InnBdCoop}.

We conclude the paper by the following remark discussing how to sharpen the gap.
\begin{remark}
In the proposed scheme there are several points for future improvement. First, the cooperation link from Tx1 to Tx2 is not utilized, though it only provides a power gain. Second, Tx2 uses a suboptimal Gaussian VQ to compress $\mb{x}_{2\frak{C}}$ defined in \eqref{eq_def}, which is because it is technically simpler to handle. The quantization distortion $\Delta$ is also a heuristic choice, which can be further optimized.
\end{remark}

\section*{Acknowledgment}
The author thanks Prof. David Tse for motivating this work and Prof. Mich\`{e}le Wigger for inspiring discussions.

\bibliographystyle{ieeetr}

\newpage

\appendices
\section{Converse Proofs}
\subsection{Proof of Lemma \ref{lem_OutBdwoCoop}}\label{app_PfOutBdwoCoop}
{\flushleft 1) $R_1+R_2$ bound:}\par
If $R_1$ and $R_2$ are achievable, by Fano's inequality and data processing inequality, we have
\begin{align}
&N\lp R_1+R_2-\epsilon_N\rp \le I\lp w_1,w_2; y^N\rp \label{eq1}\\
&\overset{\aaaa}{=} I\lp w_1,w_2, s_1^N, s_2^N; y^N\rp - I\lp s_1^N, s_2^N; y^N | w_1,w_2\rp\\
&\overset{\bbbb}{=} h\lp y^N\rp - h\lp z^N\rp - h\lp s_1^N,s_2^N| w_1,w_2\rp\\&\quad + h\lp s_1^N,s_2^N| w_1,w_2,y^N\rp\\
&\overset{\cccc}{=} -N\log\lb\lp2\pi e\rp^3 N_oQ_1Q_2\rb/2\\&\quad + \underset{(*)}{\underbrace{h\lp y^N\rp + h\lp s_1^N,s_2^N| w_1,w_2,y^N\rp}}, \label{eq2}
\end{align}
where $\epsilon_N\rightarrow 0$ as $N\rightarrow \infty$. (a) is due to chain rule. (b) is due to the fact that $\lp x_1^N, x_2^N\rp$ is a function of $\lp w_1,w_2,s_1^N,s_2^N\rp$. (c) is due to the fact that $\lbp w_1,w_2,s_1^N,s_2^N\rbp$ are mutually independent.

Note that the term $(*)$ can be upper bounded as follows
\begin{align}
&h\lp y^N\rp + h\lp s_1^N,s_2^N| y^N\rp = h\lp y^N|s_1^N,s_2^N\rp + h\lp s_1^N,s_2^N\rp \notag\\
&\le \frac{N}{2}\log\lb\lp2\pi e\rp^3 \lp N_o+P_1+P_2\rp Q_1Q_2\rb. \label{eq3}
\end{align}
This gives the outer bound \eqref{eq_OuterSum1}.
	
On the other hand, $(*)$ can also be upper bounded as follows
\begin{align}
&(*) = h\lp y^N\rp + h\lp s_1^N| w_1,w_2,y^N\rp + h\lp s_2^N| w_1,w_2,y^N,s_1^N\rp \notag\\
&\overset{\aaaa}{\le} h\lp y^N\rp + h\lp s_1^N| y^N\rp + h\lp s_2^N| w_1,w_2,y^N,s_1^N\rp \notag\\
&\overset{\bbbb}{=} h\lp y^N|s_1^N\rp + h\lp s_1^N\rp + h\lp x_2^N+z^N|w_1,w_2,y^N,s_1^N\rp \notag\\
&\le h\lp y^N|s_1^N\rp + h\lp s_1^N\rp + h\lp x_2^N+z^N\rp \notag\\
&\le \frac{N}{2}\log\lb\lp2\pi e\rp^3 \lp N_o+Q_2+P_1+P_2\rp Q_1 \lp N_o+P_2\rp\rb. \label{eq4}
\end{align}
(a) is due to conditioning reduces entropy. (b) is due to chain rule and the fact that $y^N = x_1^N + x_2^N + s_1^N + s_2^N+z^N$ and the fact that $x_1^N$ is a function of $\lp w_1, s_1^N\rp$. Hence, this leads to the outer bound \eqref{eq_OuterSum2}.

{\flushleft 2) $R_2$ bound:}
Providing the state information $\lp s_1^N,s_2^N\rp$ to the decoder, we obtain the clean MAC without transmitter cooperation, and the bound \eqref{eq_OuterR2} is trivial.

\subsection{Proof of Lemma \ref{lem_OutBdCoop}}\label{app_PfOutBdCoop}
{\flushleft 1) $R_1+R_2$ bound}\par
The first part of the proof follows the same line as the case without cooperation, from \eqref{eq1} to \eqref{eq2}, and the upper bound on $(*)$ in \eqref{eq3} is replaced by 
\begin{align*}
\frac{N}{2}\log\lb\lp2\pi e\rp^3 \lp N_o+P_1+P_2+2\sqrt{P_1P_2}\rp Q_1Q_2\rb,
\end{align*}
taking the correlation between $x_1^N$ and $x_2^N$ into account. This gives the outer bound \eqref{eq_OutCoopSum1}.
	
On the other hand, the upper bound on $(*)$ in \eqref{eq4} is slightly modified as follows:
\begin{align*}
&(*)\\
&= h\lp y^N\rp + h\lp s_1^N| w_1,w_2,y^N\rp + h\lp s_2^N| w_1,w_2,y^N,s_1^N\rp\\
&\overset{\aaaa}{\le} h\lp y^N\rp + h\lp s_1^N| y^N\rp + h\lp s_2^N| w_1,w_2,y^N,s_1^N, t_{21}^N\rp\\
&\quad + I\lp s_2^N; t_{21}^N| w_1,w_2,y^N,s_1^N\rp\\
&\overset{\bbbb}{\le} h\lp y^N|s_1^N\rp + h\lp s_1^N\rp + h\lp s_2^N| w_1,w_2,y^N,s_1^N,t_{21}^N\rp\\
&\quad + H\lp t_{21}^N\rp\\
&\overset{\cccc}{=} h\lp y^N|s_1^N\rp + h\lp s_1^N\rp + h\lp x_2^N+z^N|w_1,w_2,y^N,s_1^N,t_{21}^N\rp\\
&\quad + H\lp t_{21}^N\rp\\
&\le h\lp y^N|s_1^N\rp + h\lp s_1^N\rp + h\lp x_2^N+z^N\rp + H\lp t_{21}^N\rp\\
&\le \frac{N}{2}\log\lb\begin{subarray}{l}\lp2\pi e\rp^3 \lp N_o+Q_2+P_1+P_2+2\sqrt{P_1P_2}\rp Q_1 \lp N_o+P_2\rp\end{subarray}\rb  + N\C_{21}.
\end{align*}
(a) is due to conditioning reduces entropy. (b) is due to chain rule and conditioning reduces entropy. (c) is due to the fact that $y^N = x_1^N + x_2^N + s_1^N + s_2^N+z^N$ and the fact that $x_1^N$ is a function of $\lp w_1, s_1^N,t_{21}^N\rp$. Hence, this leads to the outer bound \eqref{eq_OutCoopSum2}.

{\flushleft 2) $R_2$ bound}\par
Providing the state information $\lp s_1^N,s_2^N\rp$ to the decoder, we obtain the clean MAC with transmitter cooperation. Then the cut-set bound gives \eqref{eq_OutCoopR2}.

\section{Gap Analysis}
\subsection{Proof of Theorem \ref{thm_ConstGapwoCoop}}\label{app_PfConstGapwoCoop}
Compare the $R_2$ bounds: 
\begin{align*}
\eqref{eq_OuterR2} - \eqref{eq_InnerR2}&= \frac{1}{2}\log\lp 1+\SNR_2\rp - \frac{1}{2}\log^+\lp \frac{1}{2} + \SNR_2\rp\\
&\le \frac{1}{2}\log\lp 1+\SNR_2\rp - \frac{1}{2}\log\lp \frac{1+\SNR_2}{2} \rp = \frac{1}{2}.
\end{align*} 

Compare the $R_1+R_2$ bounds: if $\INR_2 \le 1+2\SNR_2$, $\eqref{eq_OuterSum1} - \eqref{eq_InnerSum}$ is upper bounded by
\begin{align*}
&\frac{1}{2}\log\lp 1+\SNR_1+\SNR_2\rp\\
& - \frac{1}{2}\log\lp \frac{1+2\SNR_2}{2}\rp - \frac{1}{2}\log\lp \frac{1+\SNR_1+\SNR_2}{2+4\SNR_2}\rp = 1.
\end{align*}
If $\INR_2 > 1+2\SNR_2$, $\eqref{eq_OuterSum2} - \eqref{eq_InnerSum}$ is upper bounded by
\begin{align*}
&\log\lp \frac{1+\INR_2+\SNR_1+\SNR_2}{\INR_2}\rp\\
& - \frac{1}{2}\log\lp \frac{1+\SNR_1+\SNR_2+\INR_2}{2\INR_2}\rp + \frac{1}{2} = 1.
\end{align*}

Hence the proof is complete.

\subsection{Proof of Theorem \ref{thm_ConstGapCoop}}\label{app_PfConstGapCoop}

{\flushleft \it Case $\C_{21} < \frac{1}{2}$}:\par
The inner bound \eqref{eq_InnCoopSum} becomes the same as the sum rate inner bound \eqref{eq_InnerSum} in the case without cooperation. Therefore it suffices to compare the outer bounds \eqref{eq_OuterSum1} with \eqref{eq_OutCoopSum1} and \eqref{eq_OuterSum2} with \eqref{eq_OutCoopSum2} respectively:
\begin{align*}
\eqref{eq_OutCoopSum1} - \eqref{eq_OuterSum1} &\le \mcal{C}\lp2(\SNR_1+\SNR_2)\rp - \mcal{C}\lp\SNR_1+\SNR_2\rp \le \frac{1}{2}\\
\eqref{eq_OutCoopSum2} - \eqref{eq_OuterSum2} &\le \frac{1}{2}+\C_{21} \le 1.
\end{align*}
Using the same argument in Section \ref{app_PfConstGapwoCoop}, $\eqref{eq_OutCoopSum1} - \eqref{eq_InnCoopSum} \le 1.5$ bits, and $\eqref{eq_OutCoopSum2} - \eqref{eq_InnCoopSum} \le 2$ bits.

The inner bound \eqref{eq_InnCoopR2} becomes $\frac{1}{2}\log^+\lp \frac{1+2\SNR_2}{2}\rp + \C_{21}$, and hence $\eqref{eq_OutCoopR2} - \eqref{eq_InnCoopR2} \le 0.5$ bits.

{\flushleft \it Case $\C_{21} \ge \frac{1}{2}$}:\par
We shall distinguish into three cases.
{\flushleft 1) $\Theta_{\mfrak{C}} = \lp N_o+2P_2\rp\lp2^{2\C_{21}} -2\rp$}: In this case $r_{21}=\C_{21}$. The inner bound \eqref{eq_InnCoopR2} becomes
\begin{align*}
&\frac{1}{2}\log^+\lp \frac{1+2\SNR_2}{2}\rp + \lp \frac{1}{2}\log\lp 2^{2\C_{21}} -1\rp-\frac{1}{2} \rp^+\\
&\ge \frac{1}{2}\log\lp1+2\SNR_2\rp - 1 + \frac{1}{2}\log\lp2^{2\C_{21}}-1\rp\\
&\ge \frac{1}{2}\log\lp1+\SNR_2\rp - 1 + \C_{21} - 1/2 =  \eqref{eq_OutCoopR2} - 3/2.
\end{align*}
Hence the gap is upper bounded by $1.5$ bits.

Since $Q_2\ge \Theta_{\mfrak{C}}$, the inner bound \eqref{eq_InnCoopSum} is lower bounded by
\begin{align*}
&\frac{1}{2}\log\lp1+2\SNR_2\rp - 3/2 + \C_{21}\\
& + \underset{(**)}{\underbrace{\frac{1}{2}\log\lp \frac{1+\INR_2+\SNR_1+\SNR_2}{1+2\SNR_2+2\INR_2}\rp}}.
\end{align*}
\begin{itemize}
\item
If $2\INR_2 \ge 1+2\SNR_2$, the term $(**)$ is lower bounded by $\frac{1}{2}\log\lp \frac{1+\INR_2+\SNR_1+\SNR_2}{4\INR_2}\rp$, and hence the gap to outer bound \eqref{eq_OutCoopSum2} is upper bounded by $3/2+\frac{1}{2}\log8=3$ bits.

\item
If $2\INR_2 < 1+2\SNR_2$, the term $(**)$ is lower bounded by $\frac{1}{2}\log\lp \frac{1+\SNR_1+\SNR_2}{2\lp1+2\SNR_2\rp}\rp$, and hence the gap to outer bound \eqref{eq_OutCoopSum1} is upper bounded by $3/2+\frac{1}{2}\log4=2.5$ bits.
\end{itemize}

{\flushleft 2) $\Theta_{\mfrak{C}} = Q_2$}: In this case $r_{21} = \mcal{C}\lp1+\frac{\INR_2}{1+2\SNR_2}\rp$. The inner bound \eqref{eq_InnCoopR2} becomes 
\begin{align*}
&\frac{1}{2}\log^+\lp \frac{1+2\SNR_2}{2}\rp + \lp \mcal{C}\lp \frac{\INR_2}{1+2\SNR_2}\rp-\frac{1}{2}\rp^+\\&\quad + \C_{21}-\mcal{C}\lp1+\frac{\INR_2}{1+2\SNR_2}\rp\\
&\ge \frac{1}{2}\log\lp1+\SNR_2\rp + \C_{21} -3/2 =  \eqref{eq_OutCoopR2} - 3/2.
\end{align*}
Hence the gap is upper bounded by $1.5$ bits. Analysis of the gap from the inner bound \eqref{eq_InnCoopSum} to outer bounds \eqref{eq_OutCoopSum1} \eqref{eq_OutCoopSum2} follows the same argument as that in 1).

{\flushleft 3) $\Theta_{\mfrak{C}} = P_1-P_2$}: The inner bound \eqref{eq_InnCoopR2} becomes inactive since it is greater than the inner bound \eqref{eq_InnCoopSum}, which is lower bounded by
\begin{align*}
&\frac{1}{2}\log\lp \frac{1+2\SNR_2}{2}\rp +  \frac{1}{2}\log\lp \frac{1+\SNR_1+\SNR_2}{1+2\SNR_2}\rp - \frac{1}{2}\\
&= \frac{1}{2}\log\lp1+\SNR_1+\SNR_2\rp - 1.
\end{align*}
It is within $1.5$ bits to the outer bound \eqref{eq_OutCoopSum1}.

Combining the above analysis, we complete the proof of Theorem \ref{thm_ConstGapCoop}.

\section{Proof of Claim \ref{claim_1}}\label{app_PfClaim1}
Use the random binning scheme in \cite{GelfandPinsker_80}, the rate $I\lp U;Y\rp - I\lp U;S\rp$ is achievable. We choose $U=X+\alpha S$, $X\sim\mcal{N}(0,P)$ and independent of $S$, and $\alpha = \frac{P}{P+N_Z}$, as in \cite{Costa_83}. Therefore, $X+S = \E\lb X+S| U\rb + Z' = rU + Z'$, where $Z'$ is Gaussian and independent of $\{U, Z'\}$. Rewrite $Y=rU+Z'+Z$.

Note that $I\lp U;S\rp$ is a fixed number that does not depend on the distribution of $Z$. We focus on lower bounding $I\lp U;Y\rp$. Use the argument that Gaussian noise is the worst case noise in an additive noise channel \cite{DiggaviCover_01} and note the $Y$ is the channel output with input $U$ and additive noise $Z'+Z$, we conclude the $I\lp U;Y\rp$ is minimized when $Z$ is Gaussian. Combining the classical dirty-paper coding result \cite{Costa_83} we complete the proof.

\end{document}